\def\draft{0}  
\newcommand{\talkingPoint}[1]{\ifthenelse{\equal{\draft}{1}}{{\color{blue}{---#1---}}}{}}
\newcommand{\KM}[1]{\ifthenelse{\equal{\draft}{1}}{{\color{cyan}{---KM:#1---}}}{}}
\newcommand{\XD}[1]{\ifthenelse{\equal{\draft}{1}}{{\color{red}{---XD:#1---}}}{}}
\newcommand{\YSH}[1]{\ifthenelse{\equal{\draft}{1}}{{\color{purple}{---YS:#1---}}}{#1}}
\newcommand{\Rnote}[1]{{[\bf Rafael's Note: #1]}}
\newcommand{\Knote}[1]{{[\bf Kai-Min's Note: #1]}}
\newcommand{\Wnote}[1]{{[\bf Xiaodi's Note: #1]}}
\newcommand{\Snote}[1]{{[\bf Yaoyun\rq{} Note: #1]}}
\newcommand{\Rnote}[1]{{}}
\newcommand{\Knote}[1]{{}}
\newcommand{\Wnote}[1]{{}}
\newcommand{\Snote}[1]{{}}
\newtheorem{theorem}{Theorem}
\newtheorem{definition}[theorem]{Definition}
\newtheorem{lemma}[theorem]{Lemma}
\newtheorem{proposition}[theorem]{Proposition}
\newtheorem{fact}[theorem]{Fact}
\newtheorem{remk}[theorem]{Remark}
\newenvironment{proof}{\noindent{\bf Proof. }}{\qed}
\def\FullBox{\hbox{\vrule width 8pt height 8pt depth 0pt}}
\def\qed{\ifmmode\qquad\FullBox\else{\unskip\nobreak\hfil
\penalty50\hskip1em\null\nobreak\hfil\FullBox
\parfillskip=0pt\finalhyphendemerits=0\endgraf}\fi}
\def\qedsketch{\ifmmode\Box\else{\unskip\nobreak\hfil
\penalty50\hskip1em\null\nobreak\hfil$\Box$
\parfillskip=0pt\finalhyphendemerits=0\endgraf}\fi}
\newenvironment{proofof}[1]{\begin{trivlist} \item {\bf Proof #1:~~}}
  {\qed\end{trivlist}}
\newcommand{\poly}{{\mathrm{poly}}}
\newcommand{\zo}{\{0,1\}}
\newcommand{\E}{\mathop{\mathrm E}\displaylimits}
\newcommand{\remove}[1]{}
\newcommand{\class}[1]{{\mathbf{#1}}} 
\renewcommand{\L}{\class{L}}
\newcommand{\eps}{\varepsilon}
\newcommand{\Zsystem}{Z}
\newcommand{\DecisionSystem}{A}
\newcommand{\AcceptEvent}{{A}}
\newcommand{\RejectEvent}{{\bar A}}
\def\01{\{0,1\}}
\def\eps{\epsilon}
\newcommand{\Prob}{{\mathbf{Pr}}}
\newcommand{\tinyspace}{\mspace{1mu}}
\newcommand{\microspace}{\mspace{0.5mu}}
\newcommand{\norm}[1]{\left\lVert\tinyspace#1\tinyspace\right\rVert}
\newcommand{\defeq}{\stackrel{\smash{\text{\tiny def}}}{=}}
\newcommand{\tr}{\operatorname{tr}}
\newcommand{\ip}[2]{\left\langle #1 , #2\right\rangle}
\def\({\left(}
\def\){\right)}
\def\I{\mathsf{id}}
\newcommand{\fid}{\operatorname{F}}
\newcommand{\setft}[1]{\mathrm{#1}}
\newcommand{\lin}[1]{\setft{L}\left(#1\right)}
\newcommand{\density}[1]{\setft{Dens}\left(#1\right)}
\newcommand{\ot}{\otimes}
\def\complex{\mathbb{C}}
\def\<{\langle}
\def\>{\rangle}
\def \lket {\left|}
\def \rket {\right\rangle}
\def \rbra {\right|}
\newcommand{\ket}[1]{\lket\microspace #1 \microspace\rket}
\newcommand{\ketbra}[1]{\lket\microspace #1 \rangle\langle #1 \microspace\rbra}
\def\X{\mathcal{X}}
\def\Y{\mathcal{Y}}
\def\A{\mathcal{A}}
\def\B{\mathcal{B}}
\def\E{\mathcal{E}}
\def\K{\mathcal{K}}
\newcommand{\trnorm}[1]{\norm{#1}_{\tr}}
\newcommand{\uniform}[1]{\mathcal{U}_{#1}}
\def\defeq{\stackrel{\small \textrm{def}}{=}}
\newcommand{\commentout}[1]{}
\numberwithin{theorem}{section}
\numberwithin{equation}{section}
\newenvironment{step}
  {
    \begin{enumerate}

  }
  {\end{enumerate}}
\newenvironment{protocol*}[1]
  {
    \begin{center}
      \hrulefill\\
      \textbf{#1}
  }
  {
    \vspace{-1\baselineskip}
    \hrulefill
    \end{center}
  }
\newcommand{\Ext}{{\mathrm{Ext}}}
\newcommand{\Hmin}{\mathrm{H}_{\infty}}
\newcommand{\vecD}{{D}}
\newcommand{\vecS}{{\bf{S}}}
\newcommand{\Piamp}{{\PExt}}
\newcommand{\Picert}{{\PExt_{\mathrm{seed}}}}
\newcommand{\tamp}{{t_{\sf PRE}}}
\newcommand{\tcert}{{t_{\sf seed}}}
\newcommand{\Adv}{E}
\newcommand{\Abs}{X}
\newcommand{\PS}{\mathcal{S}}
\newcommand{\QPS}{\mathcal{S}}
\newcommand{\Alg}{A}
\title{Physical Randomness Extractors: Generating Random Numbers with Minimal Assumptions}
\author{%
 Kai-Min Chung\footnote{Institute of Information Science, Academia Sinica, Taiwan.}  $\qquad$
 Yaoyun Shi\footnote{
Department of Electrical Engineering and Computer Science,
  University of Michigan, Ann Arbor, MI 48103, USA.}
 $\qquad$ Xiaodi Wu\footnote{Center for Theoretical Physics,
Massachusetts Institute of Technology, Cambridge, MA 02139, USA. Most of the research was conducted when the author was a student at the University of Michigan, Ann Arbor.}
}
\date{March 9, 2015}                                           
\begin{document}

\begin{titlepage}
\thispagestyle{empty}
\maketitle
\thispagestyle{empty}
\begin{abstract}
\noindent
How to generate provably true randomness with minimal assumptions? This question is important not only for the efficiency and the
security of information processing, but also for understanding how extremely 
unpredictable events are possible in Nature. All current solutions
require special structures in the initial source of randomness,
or a certain independence relation among two or more sources.
Both types of assumptions are impossible to test and difficult to guarantee
in practice.
Here we show how this fundamental limit can be circumvented by
extractors that base security on the validity of physical laws and extract randomness
from {\em untrusted} quantum devices.
In conjunction with the recent work of Miller and
Shi (arXiv:1402:0489), our {\em physical randomness extractor} uses just a
single and general weak source, produces an arbitrarily long
and near-uniform output, with a close-to-optimal error,
secure against all-powerful quantum adversaries, and tolerating a constant level of implementation
imprecision. The source necessarily needs to be unpredictable to the devices,
but otherwise can even be known to the adversary.

Our central technical contribution, the Equivalence Lemma, provides a general
principle for proving composition security of untrusted-device protocols. It implies that {\em
unbounded} randomness expansion can be achieved simply by cross-feeding {\em any} two expansion protocols.
In particular, such an unbounded expansion can be made robust, which is known for the first time.
Another significant implication is, it enables the secure randomness generation and key distribution
using {\em public} randomness, such as that broadcast
by NIST's Randomness Beacon. Our protocol also provides a method for refuting local hidden variable theories 
under a weak assumption on the available randomness for choosing the measurement settings.

\end{abstract}
\end{titlepage}

\pagebreak
\thispagestyle{empty}
\noindent{\bf Version Differences.}
This draft differs substantially from 
both the first and the second versions of our arXiv posting (arXiv:1402.4797).
\begin{itemize}
\item V1 of arXiv:1402.4797 is our QIP 2014 submission (it was accepted and presented in a joint-plenary presentation.)
\item V2 of arXiv:1402.4797 introduces and formally defines the notion of physical randomness extractors. The Master Protocol
is also changed slightly so that for arbitrarily small min-entropy sources the protocol is robust. The new analysis requires some thought.
\item This current version is a substantial re-writing of V2. The main technical new material is in the formal definition of
physical extractors. The proofs are correspondingly changed. In particular, an abstract notion of error model is added, and the robustness
claim in V2 is now rigorous.  On the other hand, this version focuses on quantum-security. 
We assume the completeness of quantum mechanics, which means that
the adversary cannot obtain any information other through quantum operations.
A planned update of this work will address non-signaling security.

\end{itemize}

\pagebreak
\newcommand{\abstracts}{abstract }
\newcommand{\physical}{untrusted }
\newcommand{\Phy}{\vecD}
\newcommand{\vecPhy}{\Phy}
\newcommand{\PRE}{PRE }
\newcommand{\PExt}{\mathrm{PExt}}
\def\Proc{P}
\newcommand{\Procamp}{{\Proc_{\sf amp}}}
\newcommand{\Proccert}{{\Proc}}
\newcommand{\ndevice}{t}

\setcounter{page}{1}
\section{Motivations}
\talkingPoint{importance of randomness}
Randomness is a vital resource
for modern day information processing. The wide range of its applications 
include cryptography, fast randomized algorithms,
accurate physical simulations, and fair gambling.
\talkingPoint{the current solutions require assumptions}
In practice, randomness is generated through a ``random number generator'' (RNG),
such as Intel's on-chip hardware generator \texttt{RdRand}
and Linux's software generator \texttt{/dev/random}.
Since it is impossible to test if the output of a RNG is uniformly
distributed or fixed,~\cite{convex}
one relies on the mathematical properties
of the RNG to ensure the output quality under a set of assumptions
that are hopefully true in reality.
For example, Linux's RNG critically requires being seeded
with a large amount of initial entropy 
and the unproven assumption that no adversary is computationally powerful enough to
differentiate the output from uniform.

\talkingPoint{those assumptions fail them}
Those assumptions, however, have been repeatedly shown
to cause failures of practical RNGs~(see, e.g., \cite{GuttermanPR06,Ristenpart:2010,Heninger:2012,Lenstra+,NYT:Snowden}).
Such vulnerabilities of RNGs directly threaten the very foundation of digital security,
and risk invalid conclusions drawn from computations assuming
true randomness. Thus when security is of paramount importance, it is 
highly desirable to use RNGs that are secure
under a minimal set of assumptions.

\talkingPoint{classical theory and the independence barrier}
The classical theory for this objective is that 
of {\em randomness extractors}~\cite{Vadhan07}.
In this theory, an extractor is a deterministic algorithm that transforms several sources
of weak randomness into near-perfect randomness. The amount of randomness in 
a weak source is quantified by {\em min-entropy},
or conditional quantum min-entropy when the adversary is quantum.
More precisely, an $(n, k)$ source is an $n$-bit binary string with (conditional quantum)
min-entropy $k$, which means that the best chance
for an adversary to guess the source correctly is $\le 2^{-k}$~\cite{Zuckerman:1990, Renner:thesis,RennerWW:2006}.
A fundamental limit known in this theory is that 
randomness extraction
is possible only when {\em two or more independent} sources are available.
In particular, deterministic extraction,
i.e., using just one source, is known to be impossible to produce even $1$ (near-prefect) random bit~\cite{SanthaV84}.
Since independence is impossible to check~\cite{independence}
and difficult to guarantee in practice, 
the classical theory of randomness extractors inevitably relies on {\em assuming} {independence}.

\talkingPoint{quantum solution and the trust problem}
Quantum mechanics has perfect randomness in its postulate, thus appears to 
provide a simple solution to the problem~\cite{quantum_measurements}.
Indeed, commercial products are already available (e.g., the Quantis generators of ID Quantique).
However, users must trust the quantum devices in use for security. This is a strong assumption undesirable
in certain circumstances for the following reasons.
First, as classical beings, we can only directly verify classical information, thus cannot 
directly verify the inner-workings of quantum devices. Second, we may not want to trust the manufacturers or the certifying government agencies.
Finally, even if the manufacturers are truthful, the devices themselves may not work properly due to
technological limitations. No method is currently known for reliably implementing quantum devices in a large scale.

\talkingPoint{a new solution space and the limitations of known works}
Recent works have shown that one can still leverage
the quantum power for generating randomness even when the underlying quantum
devices may be imperfect, or even 
malicious~\cite{col:2006, ColbeckK:2011, Pinorio, VV12, MS, CY, ColbeckR:2012, Gallego:2012, Brandao:amplification}.
However, all those protocols also crucially rely on a certain form of independence.
More specifically, {\em randomness expansion} protocols require that the source
is globally uniform (thus independent
from the devices)~\cite{col:2006,ColbeckK:2011, Pinorio, VV12, MS, CY}.
{\em Randomness amplification} protocols~\cite{ColbeckR:2012, Gallego:2012, Brandao:amplification}
initiated by Colbeck and Renner~\cite{ColbeckR:2012}
require that the source, when conditioned on the adversary's side information,
is a highly random and highly structured SV-source~\cite{SV-source}.
In~\cite{ColbeckR:2012}, the source in addition needs to satisfy certain causal relation among its blocks,
while in~\cite{Gallego:2012, Brandao:amplification}, conditioned on the adversary's information, the source and
the devices are assumed to be independent.~\footnote{The conditional independence condition is stated explicitly in the work Gallego et al.~\cite{Gallego:2012} (Supplementary Note 2) and Brandao et al.~\cite{Brandao:amplification} (Section II.B).
The causal relation assumed by Colbeck and Renner~\cite{ColbeckR:2012} is illustrated in their Fig.~2 and stated below it.}


\section{Our contributions}
\paragraph{Physical Randomness Extractors: a model for extracting randomness without independence assumptions.}
To circumvent those fundamental limits and to minimize necessary assumptions, 
here we formulate a framework of extracting randomness from {\em untrusted}
quantum devices in the quantum mechanical world, shown in Fig.~\ref{fig:phy_ext}. 
This framework of {\em Physical Randomness Extractors (PREs)} allows general and rigorous discussions of extracting 
randomness when the devices and the adversary are both bound by physical laws.
This reliance on physical theories for security is a fundamental departure from the classical theory of randomness
extraction. Since all cryptographic protocols will eventually be deployed in the physical
world, no additional effort needs to be made to enforce the assumptions on the correctness
of physical laws (as Nature automatically ensures that)~\cite{general}.

\begin{figure}[h]
\begin{center}
\begin{tikzpicture}[shorten >=1pt,node distance=2cm,thick,scale=.6, every node/.style={scale=.6}]
\node [circle,draw, label={below: a $(n,k)$ min-entropy source}] (start_n) {$X$};
\node [state, circle,label={above:adversary}] (Eve) [above of = start_n,yshift=1cm] {$E$};
\node [state, rectangle, minimum width=2cm] (ext) [right of =start_n, xshift=1cm,label=above:\textrm{deterministic}] {$\PExt$};
\node (ext_node) [right of = ext] {};
\node (dots) [right of = ext_node] {$\vdots$};
\node [state, rectangle, minimum width=1.5cm] (P_1) [above of = dots] {$\Phy_2$};
\node [state, rectangle, minimum width=1.5cm] (P_0) [above of = P_1] {$\Phy_1$};
\node [state, rectangle, minimum width=1.5cm] (P_s) [below of = dots] {$\Phy_{\ndevice-1}$};
\node [state, rectangle, minimum width=1.5cm] (P_l) [below of = P_s] {$\Phy_{\ndevice}$};
\node [rectangle, minimum size=10cm, dotted]  (phy) [right of = ext_node, label=above:\textrm{untrusted devices $\vecPhy$}]{};
\node [circle, draw](out) [below of = ext, label=below:\textrm{$N$-bit output or reject}]{$Z$};

\draw [dotted] ([xshift=-.5cm, yshift=.5cm]P_0.north west) rectangle ([xshift=.5cm,yshift=-.5cm]P_l.south east);
\path[<->]
(ext) edge node {} ([xshift=-1cm] dots.west);
\path[->]
(start_n) edge node {} (ext)
 (ext) edge node {} (out);
 \draw[decorate,decoration=snake] (Eve) -- (start_n);
 \draw[decorate,decoration=snake] (Eve) -- ([xshift=5.25cm]Eve.east);

\end{tikzpicture}
\caption{\small Physical Randomness Extractor (PRE). A PRE is a deterministic algorithm $\PExt$ that takes a classical source $X$ as the input,
interacts with a set of untrusted quantum devices $D$, and finally either aborts (aka rejects) or outputs a binary string $Z$. Each device
is used through its classical input-output interface but its inner-working is unknown (and could be malicious). The Adversary $E$ is quantum and all-powerful,
may be in an unknown quantum correlation with the devices, and together with the devices may hold a certain amount of side information about $X$. After the protocol starts, no communication is allowed among the Adversary and the devices.
The error of $\PExt$ upper-bounds
both the probability of accepting an undesirable output (soundness error) or that of rejecting an honest implementation (completeness error). If $X$ is globally uniformly random, $\PExt$ is said to be {\em seeded}; otherwise, it is {\em seedless}.
$\PExt$ is {\em robust} if an honest implementation can deviate from an ideal implementation by a constant amount. See Section~\ref{sec:formal_model} for the formal definitions. 
}
\end{center}
\label{fig:phy_ext} 
\end{figure}

Our framework is built upon the above-mentioned two lines of research on randomness expansion
and randomness amplification. It in particular includes the quantum restriction of those models~\cite{nonsignaling}.
As special cases.
{Randomness expansion}
is precisely {\em seeded} PRE-extraction, where the seed is uniform globally. {Randomness amplification}, when restricted
to the quantum world,
can be seen as {\em seedless} (i.e., the classical source is not uniform) PRE-extraction of a single bit with a restricted source.
Our framework explicitly quantifies the various relevant resources.
This allows richer analyses and comparisons of protocols, and raises new questions for optimizing the performance parameters
and investigating their inherent tradeoffs. For example, the {\em extraction rate} introduced, 
i.e., the ratio of the output length and the total length of the device output, is a natural measure
for the efficiency of a PRE (See Section~\ref{sec:formal_model} for more details.) We discuss several fundamental
open problem in Section~\ref{sec:open}.

We point out that it'd be more appropriate to consider the untrusted devices as the source of the output randomness,
while the classical source is used to prevent cheating. This intuition is supported by the strong quantitative relations
linking the min-entropy of the source to the error parameter, and the number of device usages to the output length.
A useful comparison of a PRE with classical strong extractor is to consider the weak source of a PRE corresponds to the
seed for the latter, while the devices correspond to the weak source. Under this correspondence there is a fundamental
difference between those two models regarding the correlation vs independence of the corresponding two sources.

\begin{figure}[h]
\begin{center}
\begin{tikzpicture}[shorten >=1pt,node distance=2cm,thick,scale=.6, every node/.style={scale=.6}]
\node [state, initial above, initial text=Input $X$] (source) {$X$};
\node [state,minimum size=1cm ] (ext_mid) [below of = source, label=above right:\textrm{seed=$10\cdots0$}] {$\textrm{Ext}$};
\node (dots_1) [left of = ext_mid] {$\cdots$};
\node (dots_2) [right of = ext_mid] {$\cdots$};
\node [state,minimum size=1cm ] (ext1) [left of = dots_1, label=above right:\textrm{seed=$00\cdots0$}] {$\textrm{Ext}$};
\node [state,minimum size=1cm ] (ext2) [right of = dots_2, label=above right:\textrm{seed=$11\cdots1$}] {$\textrm{Ext}$};
\node [rectangle,minimum size=2cm,draw] (SRC_mid) [below of = ext_mid] {$\PExt_{\mathrm{seed}}$};
\node (dots_1) [left of = SRC_mid] {$\cdots$};
\node (dots_2) [right of = SRC_mid] {$\cdots$};

\node [rectangle, minimum size=2cm, draw] (SRC1) [below of = ext1] {$\PExt_{\mathrm{seed}}$};
\node [rectangle, minimum size=2cm, draw] (SRC2) [below of = ext2] {$\PExt_{\mathrm{seed}}$};

\node [state] (XOR) [below of = SRC_mid] {$\oplus$};
\node (out) [below of = XOR] {Output $Z$ if no more than $\eta$ fraction of $\Picert$ reject.};
\path[->]
(XOR) edge node [right]{} (out)
(source)  edge [bend right=45] node [above] {$X$} (ext1)
	  edge [bend left=45] node [above] {$X$} (ext2)
	  edge  node [left] {$X$} (ext_mid)
(ext1)  edge [right]node  {$S_{00\cdots0}$} (SRC1)
(ext2)  edge [right]node  {$S_{11\cdots1}$} (SRC2)
(ext_mid)  edge [right]node  {$S_{10\cdots0}$} (SRC_mid)
(SRC1)   edge [bend right=35, below] node  {$Z_{00\cdots0}$} (XOR)
(SRC2)  edge [bend left=35,below]  node  {$Z_{11\cdots1}$} (XOR)
(SRC_mid)  edge [right]node  {$Z_{10\cdots0}$} (XOR);
\end{tikzpicture}
\caption{\small Our Physical Randomness Extractor $\Piamp$ with parameters $\Ext$, $\Picert$, and $\eta$. $\Ext$ is a quantum-proof
strong extractor~\cite{extractor_def} and $\Picert$ a
seeded-PRE whose input length equals the output length of $\Ext$.
For each distinct seed value $i$ of $\Ext$, run an instance of $\Ext$ with that seed value and $X$ as the source.
Use the output $S_i$ as the input to a separate instance of $\Picert$. Output the XOR of the $Z_i$'s,
or abort if $\ge\eta$ fraction of $\Picert$ aborted.
}
%
\end{center}
\label{fig:prot1}
\end{figure}

\paragraph{An explicit construction.} We further construct the first such PRE, as shown in Fig.~\ref{fig:prot1},
that needs only a single classical source and makes no independence assumptions.
The source can be known completely to the adversary (and only has entropy to the devices).
It can be arbitrarily correlated with the untrusted devices, with an almost optimal translation of the degree of 
correlation into the quality parameter for the output~\cite{translate}.
Our extractor,  for the first time, circumvents any form of input-structural or \emph{independence} assumptions 
underlying all existing solutions\cite{all-need-independence}.
In conjunction with~\cite{MS}, our extractor is able to extract \emph{arbitrarily long} randomness from untrusted devices using \emph{any} weak source with \emph{constant bits min-entropy with respect to the devices}.
It is also robust against a constant level of device imprecision, a critical property for practical implementations.
Given enough number of devices, the output error of our protocol can be made close to the minimum.
Given a desirable output error $\epsilon$, the number of devices can be made a polynomial in $1/\epsilon$.

The assumptions for our extractor to work form a minimal set in the following sense~\cite{refine}.
First, a single min-entropy source alone (i.e. without any additional resource) is insufficient due to the impossibility of deterministic extraction~\cite{SanthaV84}. Untrusted devices alone (i.e., without any min-entropy source) are not sufficient either,
because the devices can then pre-program their deterministic answers without generating any randomness. Without
any communication restriction between the adversary and the devices, our task would 
become impossible trivially. Such a restriction between the computational components of an extractor and the adversary is also implicitly assumed for classical extractors. If the devices can communicate freely,
there would be effectively a single device.
Then this single device's optimal strategy for minimizing the abort probability can be made deterministic~\cite{single_device}.
Together with the extractor's deterministic
algorithm, we would then have a deterministic extractor, which cannot even extract a single bit from a general min-entropy source.
We note that it would be useful for practical considerations to relax the no-communication restriction. On the other hand,
results assuming no-communication can be useful for those settings as well (e.g. the output min-entropy reduces by the amount
among the devices and the adversary.)
While in principle, there may be other incomparable minimal set of assumptions allowing for randomness extraction,
we successfully remove assumptions required by all current methods: structural restrictions on the input, some forms of independence and the trust on the inner-working
of the quantum device(s).

Our construction needs two existing ingredients: A {\em quantum-proof classical randomness extractor} $\Ext$ and
a seeded PRE $\Picert$.  Intuitively, $\Ext$ uses a globally uniform seed to transform a min-entropy source into an output that is close-to-uniform with respect to an all-powerful quantum adversary (see Section~\ref{sec:result} for a formal definition.)
Our main theorem provides a ``Master Protocol'' for constructing physical randomness extractors from 
{\em any} pair of $\Ext$ and $\Picert$ with matching input/output length. Our main technical contribution is a general principle for proving security when composing multiple untrusted-device protocols.

We introduce a few technical concepts in order to state our main theorem concretely. The {\em error of $\Ext$} is the worst-case,
over all $(n, k)$ sources, standard distance (trace distance) of the input-output-adversary state to the ideal state, where the output is uniformly distributed with no correlation with the input-adversary subsystem. The {\em noise} in an untrusted device describe its deviation from
the performance of an ``ideal'' device. We define noise fairly generally so that our result is applicable in a wide range of settings.
One specific example is when performing  a Bell-test, such as in the well-known CHSH game~\cite{CHSH}, the noise can be defined
to be the gap between the device's success probability with that of the optimal quantum success probability.
The {\em error of an untrusted-device protocol} is the maximum
of two types of errors: the {\em completeness} error and the {\em soundness} error. The completeness error under a fixed level of noise
is the probability of the protocol rejecting an implementation where the device(s) used are within the specified noise level to
the ideal device(s). The soundness error quantifies the chance of accepting an undesirable output. 

\begin{theorem}[Main Theorem (Informal)]\label{thm:main:informal}
 Let $(\Ext, \Picert)$ be a pair of quantum-proof classical randomness extractor and seeded PRE such that
the output length of $\Ext$ is the same as the input length of $\Picert$. Suppose that $\epsilon$ upper-bounds both the errors of 
$\Ext$ on any $(n, k)$ source and $\Picert$ for a certain noise level. Then the composition of multiple instances of $\Ext$ and $\Picert$ shown in Fig.~\ref{fig:prot1}
with $\eta=\sqrt{\epsilon}$ is a seedless PRE whose error for the same noise level and on an $(n, k)$ source is $O(\sqrt{\epsilon})$.
Furthermore, the source can be known to the adversary and the min-entropy required is with respect to the devices only.
\end{theorem}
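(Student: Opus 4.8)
The plan is to prove the completeness and soundness halves of the PRE error bound separately, with a general composition principle — the Equivalence Lemma — as the main new ingredient. Throughout, bundle the untrusted devices into one system $D$ and the adversary into $E$, so the hypothesis reads $\Hmin(X\mid D)\ge k$ while $X$ may be fully known to $E$. \textbf{Completeness.} Fix an honest implementation within the prescribed noise level. Each of the $2^d$ instances $\Picert_i$ is then an honest run of the seeded PRE, hence aborts with probability at most $\epsilon$; so the expected fraction of aborting instances is $\le\epsilon$, and Markov's inequality bounds the probability that more than an $\eta=\sqrt\epsilon$ fraction abort — i.e. the probability that the Master Protocol aborts — by $\epsilon/\eta=\sqrt\epsilon$. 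Conditioned on not aborting, completeness of $\Picert$ makes each $Z_i$ close to uniform given $E$, and since honest devices are uncorrelated across instances, $Z=\bigoplus_i Z_i$ is $O(\sqrt\epsilon)$-close to uniform given $E$; in fact it already suffices that one honest instance produce near-uniform output given everything else.

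\textbf{Soundness.} Now fix an arbitrary, possibly malicious, device strategy and adversary with $\Hmin(X\mid D)\ge k$. Apply the quantum-proof strong-extractor guarantee with $D$ in the role of the side information: for a uniform seed $I$, the state on $(\Ext(X,I),I,D)$ is $\epsilon$-close in trace distance to $U_m\otimes U_d\otimes\rho_D$. Averaging over $i$ and Markov yield a set $\mathcal G$ of ``good'' seeds with $|\mathcal G|\ge(1-\sqrt\epsilon)2^d$ such that, for $i\in\mathcal G$, the string $S_i=\Ext(X,i)$ fed to $\Picert_i$ is $\sqrt\epsilon$-close to uniform \emph{conditioned on the initial state of the devices} — even though $S_i$ is perfectly known to $E$, so it cannot be globally uniform. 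Conditioned on the Master Protocol not aborting, fewer than $\sqrt\epsilon\,2^d$ instances abort, so $\mathcal G$ contains an index $i^\star$ whose instance does not abort; since $i^\star$ is a function only of $\mathcal G$ and the abort flags, it may be revealed. The claim is that $Z_{i^\star}$ is close to uniform conditioned on $E$, on the $Z_j$ with $j\ne i^\star$, on all the $S_j$, and on the abort flags. To see this, collapse everything other than device $i^\star$ into a single virtual adversary $\tilde E$; because no device communicates with another device or with $E$ after the protocol starts, $\tilde E$ does not communicate with device $i^\star$. Device $i^\star$ is thus running $\Picert$ with a seed that is $\sqrt\epsilon$-close to uniform given device $i^\star$'s own initial state but is correlated with (indeed determined by) $\tilde E$. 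This is exactly the configuration addressed by the Equivalence Lemma: the input--output behaviour of an untrusted-device protocol on such a seed is indistinguishable from its behaviour on a globally fresh uniform seed, so the soundness of $\Picert$ as a seeded PRE applies verbatim and gives that, conditioned on $\Picert_{i^\star}$ not aborting, $Z_{i^\star}$ is $\epsilon$-close to uniform given $\tilde E$, in particular given $E$ and all the other $Z_j$. Hence $Z=Z_{i^\star}\oplus\bigl(\bigoplus_{j\ne i^\star}Z_j\bigr)$ is $O(\sqrt\epsilon)$-close to uniform given $E$ conditioned on non-abort (the $\sqrt\epsilon$ from the good-seed approximation, the $\epsilon$ from $\Picert$'s soundness, plus lower-order terms from renormalising by the non-abort probability). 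Combined with the completeness bound this gives total error $O(\sqrt\epsilon)$, and since only $\Hmin(X\mid D)\ge k$ was used, the min-entropy requirement is with respect to the devices alone.

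\textbf{Main obstacle.} The substantive work is isolating and proving the Equivalence Lemma: one must formalise what it means to compose untrusted-device protocols, and show that replacing a seed that is uniform-conditioned-on-the-devices by a globally uniform one changes no physically observable quantity — in the fully quantum setting where the devices may be entangled with one another and with $E$. Several bookkeeping points also need care: the index $i^\star$ is run-dependent, so the conditioning must be arranged so that $i^\star$ is measurable with respect to what is revealed before $\Picert_{i^\star}$'s soundness is invoked; conditioning on non-abort is a nonlinear renormalisation that must be controlled (harmless when the non-abort probability is not too small, with the abort case absorbed into the error); and the error-model and noise accounting must be threaded through so that the noise level certified for $\Picert$ transfers to each instance. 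Once the Equivalence Lemma is in hand, the remaining steps — the extractor bound, Markov's inequality, and the XOR argument — are routine.
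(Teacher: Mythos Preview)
Your high-level plan matches the paper's: turn the weak source into quantum somewhere-randomness via $\Ext$, invoke the Equivalence Lemma so that a uniform-to-devices seed suffices for each $\Picert_i$, then combine via the XOR. You also correctly isolate the Equivalence Lemma as the real work. But the soundness argument as written has a quantitative gap that the paper resolves by a different combination step.

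The problem is the adaptive choice of $i^\star$. For each fixed $i\in\mathcal G$ the Equivalence Lemma together with the soundness of $\Picert$ gives, on the subnormalized $A_i=1$ state, that $Z_i$ is $O(\sqrt\epsilon)$-close to uniform given $X$, $E$, and everything outside device $i$ (including all abort flags). But when you partition the accepting state by the value of $i^\star$ (a function of the abort flags) and invoke the per-index bound on each piece, the errors add: you obtain $\sum_{i\in\mathcal G}O(\sqrt\epsilon)\approx 2^d\sqrt\epsilon$, not $O(\sqrt\epsilon)$. The $O(\sqrt\epsilon)$ bound for index $i$ controls the total error over \emph{all} flag patterns with $A_i=1$, while you only use the sub-collection with $i^\star(\cdot)=i$; nothing tells you the error concentrates on the unused patterns. ``There exists a good accepting index on every accepting run'' is true, but it does not let you charge the global error to a single instance.

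The paper avoids this by \emph{averaging} rather than selecting. It works with subnormalized states throughout (so no renormalisation is ever needed) and compares the accepting state $\gamma^A$ to the average $E_i[\gamma_i^{A}]$ of ideal states, each of which has $Z_i$ --- hence $Z$ --- uniform to $XE$. The identity $\gamma^A = E_i[\gamma^{A\wedge A_i}] + E_i[\gamma^{A\wedge \bar A_i}]$ does the work: the second term has trace at most $\eta$ because the acceptance rule forces the \emph{average} (over $i$) rejection indicator to be below $\eta$, and the first term is within $E_i[2\sqrt{w_i}+\eps_s]\le 2\sqrt{\eps_{\Ext}}+\eps_s$ of $E_i[\gamma_i^{A}]$ by the per-index soundness. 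No good set $\mathcal G$ and no run-dependent index are introduced.

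Two smaller points on completeness. First, you apply $\Picert$'s completeness directly to each instance, but $S_i$ is only $w_i$-close to uniform-to-device, not exactly so; the paper absorbs this and gets completeness error $(\eps_c+\eps_{\Ext})/\eta$. Second, the sentence ``conditioned on not aborting, completeness of $\Picert$ makes each $Z_i$ close to uniform given $E$'' conflates completeness with soundness: completeness concerns only the acceptance probability under honest devices and says nothing about the output distribution.
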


The last property of allowing the source to be public means that the randomness is extracted from the untrusted devices, and may have significant practical implication. This is because it allows one to
use a reputable public service, such as NIST's Randomness Beacon~\cite{NIST},
when the users are sufficiently confident that the device makers have little knowledge of the public randomness.

Different choices of $\Ext$ and $\Picert$ give different instantiations of the Master Protocol with different advantages.
We highlight the following three instantiations. 
(1) \emph{The weakest sources.} Using any $(n, k)$ sources of (sufficiently large) constants $n$ and $k$, we can achieve
a constant extraction rate with a constant error for an unbounded output length.
(2) \emph{Minimizing error.} Given sufficiently many devices, our method can reach an error $2^{-\Omega(k^\mu)}$, where $\mu\ge 1/2$ is a universal constant. 
(3) \emph{High min-entropy sources.} For a polynomial entropy rate (i.e., $k \geq n^{\alpha}$ for $\alpha \in (0,1)$), we can extract from $\poly(n)$ untrusted devices with an inverse polynomial error (i.e., $k^{-\beta}$ for $\beta \in (0,1)$) in $\poly(n)$ time.
The Miller-Shi expansion protocol~\cite{MS} is the strongest known $\Picert$ in many aspects thus is used
to achieve robustness and unbounded extraction.
For $\Ext$, (1) uses (repeatedly) a one-bit extractor~\cite{KonigT:extractor}, (2,3) use Trevisan's extractors~\cite{Trevisan:extractor, DPVR12}
(as in Corollary 5.3 and 5.6 of De {\em et al.}~\cite{DPVR12}, respectively).

We sketch the proof for the Main Theorem here. A foundation for all known untrusted-device protocols is to test the super-classical behavior of the devices using the classical source. 
The main challenge for our seedless extraction is to perform such a test with only a given amount of min-entropy to the devices, without any structural or independence assumptions.
Our solution is in essence a reduction of seedless extraction to the syntactically easier task of seeded extraction.
We first improve the input randomness {\em locally}. By the property of the quantum-proof strong extractor $\Ext$,
the output $S_{00...0}\cdots S_{10...0}\cdots S_{11...1}$ of the $\Ext$ instances forms a ``quantum somewhere randomness (QSR)'' source,
in that most of the blocks $S_i$ are almost uniform to the devices. Call such a block ``good.''
Next, each good $S_i$ is transformed by the corresponding $\PExt$ to be near uniform to the adversary.
This transformation {\em decouples} the correlation between a good $S_i$ with the rest of the blocks,
ensuring the near-perfect randomness of the final output.

\paragraph{Equivalence Lemma: a principle for proving composition security.} 
Note that for ``decoupling'' to be meaningful, the source is in general only (close to) uniform-to-device but may be arbitrarily correlated otherwise.
Thus the decoupling feature of the seeded extractors
does not follow directly from their definition or the original proof for their security~\cite{VV12, MS}, which
require a globally uniform input. Our main technical contribution is the following ``Equivalence Lemma'' that bridges the gap
in the input requirements in full generality.

\begin{lemma}[Equivalence Lemma (informal)]\label{lm:eq:informal}
The performance of a seeded physical randomness extractor remains the same
when its uniform-to-all input is replaced by a uniform-to-device input.
\end{lemma}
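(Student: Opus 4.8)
The plan is to prove the Equivalence Lemma by exhibiting, for any adversary-device strategy $\mathcal{S}$ and any uniform-to-device input distribution $X$ (with side information $E$), a "companion" strategy $\mathcal{S}'$ operating against a genuinely uniform-to-all seed, such that the joint output-transcript-adversary state produced by $(\mathcal{S},X)$ is \emph{identical} (not merely close) to the one produced by $(\mathcal{S}',U)$, where $U$ is uniform and independent of everything. The key structural observation is that a uniform-to-device source $X$ is, by definition of conditional min-entropy $\Hmin(X \mid D) = |X|$, \emph{maximally} mixed from the device's point of view: there is a purification/dilation in which the joint state of $X$ together with the device registers factorizes as $\uniform{X} \otimes \rho_D$, with the adversary $E$ holding whatever correlation it likes with $X$. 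So conditioned on the devices' internal state, $X$ looks exactly uniform; the only thing the devices can "see" about $X$ is nothing. This is precisely the hypothesis under which the original seeded-PRE security proof was carried out (there $X$ was uniform-to-all, hence in particular uniform-to-device).

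The steps, in order. First, I would set up the formal model of a seeded-PRE run as a sequence of classical-quantum states: the initial state $\rho_{X E D}$ with $X$ uniform-to-device, the protocol $\PExt_{\mathrm{seed}}$ being a fixed deterministic interaction, and the final state $\rho_{Z T X E D}$ (output, transcript, source, adversary, post-measurement devices). Second, I would define the "conditioning" map: for each fixed value $x$ of the source, the protocol run is a well-defined untrusted-device protocol on input $x$, producing a state $\rho^{(x)}_{Z T E D}$. Because $X$ is uniform-to-device, the device+adversary side information is in product with a uniform $X$ \emph{after} accounting for $E$'s knowledge; the catch is that $E$ may know $x$. Third — and this is the heart — I would argue that from the protocol's security standpoint $E$ knowing $x$ is harmless: fold $X$ into $E$'s register, i.e., define a new adversary $E' = (E, X)$ and a new "globally uniform seed" which is a fresh copy $X'$ of $X$ that the protocol actually reads. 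Since $X \perp D$ given the dilation, $(X', E', D)$ has $X'$ uniform and independent of the pair $(E',D)$ — that is the uniform-to-all condition — and the induced protocol transcript and output are distributed exactly as in the original run. Finally, invoke the already-established security of $\PExt_{\mathrm{seed}}$ against $(\E', D)$ with uniform-to-all seed $X'$ to conclude the output $Z$ is near-uniform conditioned on $(E', D) = (E, X, D)$, which is exactly the claim: near-uniform even to an adversary who knows the source. Completeness (honest-implementation acceptance) transfers the same way since the honest devices don't look at $X$ at all.

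The main obstacle I anticipate is making the "fold $X$ into the adversary" step fully rigorous in the quantum setting. One must be careful that (i) the \emph{devices} genuinely do not get access to $x$ — the protocol's messages to the devices are functions of $x$, so the devices \emph{do} learn a digest of $x$ through the transcript, and one has to verify that this is already captured inside the original seeded-PRE security statement (it is, because there too the transcript depends on the uniform seed); and (ii) the purification is chosen so that "$X$ uniform-to-device" really does yield a product structure $\uniform{X} \otimes \rho_D$ at the \emph{start} of the protocol — this is exactly the statement that conditional min-entropy being maximal forces a product state, which I would cite from the min-entropy literature (\cite{Renner:thesis, RennerWW:2006}). A secondary subtlety is bookkeeping the error model and the noise level: since everything is an exact distributional identity after the reduction, the error bound for the uniform-to-device case is literally the same number as for the uniform-to-all case, with no loss — but I should state this carefully so the constant in the Main Theorem's $O(\sqrt{\epsilon})$ is not inflated here. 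I expect no genuine loss: the lemma should be an equality of performance, which is why it is called an \emph{Equivalence} Lemma.
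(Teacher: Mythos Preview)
Your overall direction---reduce the device-uniform case to the globally-uniform case by reshaping the adversary---is the right one, and it is what the paper does. But the concrete reduction you describe fails at exactly the step you call ``the heart.'' You set $E' = (E, X)$ and introduce ``a fresh copy $X'$ of $X$ that the protocol actually reads,'' then assert $X' \perp (E', D)$ while also asserting that the protocol's output--adversary joint state is unchanged. These two assertions are incompatible. If $X'$ carries the same value as $X$ (so the run is literally the original run), then $X'$ is determined by $E' = (E,X)$ and hence not independent of it---the globally-uniform hypothesis fails. If instead $X'$ is a genuinely fresh uniform variable independent of $(X,D,E)$, then the globally-uniform hypothesis holds, but the protocol now reads $X'$ rather than $X$: the resulting output $Z' = Z(X',D)$ has a \emph{different} joint distribution with $E$ than the original $Z = Z(X,D)$ does, because $Z$ is correlated with $E$ through $X$ while $Z'$ is not. (Take the extreme case $E = X$: in $(Z,E)$ the adversary knows the seed that produced $Z$, whereas in $(Z',E)$ it does not.) The product structure $\uniform{X}\otimes\rho_D$ you invoke is only a statement about the $XD$-marginal; it does not let you decouple $X$ from $E$ by relabeling registers.

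The paper supplies the missing construction via Uhlmann's theorem rather than by copying $X$. Writing $\rho'_{XDE} = \sum_x 2^{-n}\ketbra{x}\otimes \rho'^x_{DE}$, purify each $\rho'^x_{DE}$ to $\ket{\phi^x}_{DEE'}$. The device-uniform hypothesis says all the $D$-marginals $\phi^x_D$ coincide, so Uhlmann gives unitaries $U^x$ acting on $EE'$ alone with $\ket{\phi^x} = U^x\ket{\phi^{0}}$. Then $\rho \defeq \uniform{X}\otimes\ketbra{\phi^{0}}_{DEE'}$ is genuinely globally uniform, and the $X$-controlled $EE'$-operation $M$ (apply $U^x$, trace out $E'$) satisfies $\rho' = M(\rho)$. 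Since $M$ acts only on $EE'$ while the extraction map $\Phi_{\PExt}$ acts on $XD$, they commute: $\Phi_{\PExt}(\rho')^{\AcceptEvent} = M\bigl(\Phi_{\PExt}(\rho)^{\AcceptEvent}\bigr)$. Applying $M$ to the ideal comparison state preserves ``$Z$ uniform to $XE$,'' and trace distance is non-increasing under $M$, so the soundness error transfers with zero loss; completeness transfers because $\rho'_{XD} = \rho_{XD}$. The Uhlmann step is precisely the ingredient your proposal lacks: the correct ``new adversary'' is not $(E,X)$ but the purifying system carrying $\ket{\phi^{0}}$, and the $X$--$E$ correlation is reinstated \emph{after} the protocol via the $X$-controlled map $M$, which is why it does not interfere with security.
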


As a basic principle for securely composing untrusted-device protocols, Equivalence Lemma has found other applications.
We describe two most striking applications (besides the main result).

The first is on {\em unbounded randomness expansion}, i.e., seeded extraction where then output length does not depend on
the input length. Whether or not one could expand randomness securely beyond the exponential rate first shown by Vazrani and 
Vidick~\cite{VV12} was a natural question~\cite{unbounded}.
Intuitively, unbounded expansion is possible because the untrusted devices are randomness-generating.
Indeed, for any $N$, repeating an expansion protocol $O(\log^*N)$ times using a different set of devices
each time expands a seed of a fixed length $N$ output bits. A folklore method for achieving unbounded expansion using
a constant number of devices is to cross-feed two expansion protocols, i.e., using the output of one as the input to the other.
Through an intricate analysis, Coudron and Yuen~\cite{CY} showed that a specific cross-feeding protocol is indeed secure.

The Equivalence Lemma immediately implies that the cross-feeding protocol using {\em any} two expansion protocols is
secure. This is simply because for each expansion protocol,
the input is always (almost) uniformly random to its devices (thus the output is always (almost) uniformly random to the next set of devices.)
This in particular implies that using the robust expansion protocol of Miller and Shi~\cite{MS} gives a {\em robust} unbounded
expansion protocol. The protocol analyzed in~\cite{CY} requires that an honest implementation must tend to an ideal implementation
as the output length grows (thus if the output length is chosen after the device is given, either the device has to be perfect or the output length
cannot be unbounded.)~\footnote{An earlier version of this work~\cite{CSW:QIP} containing the Equivalence Lemma and~\cite{CY} were independent, though we did not state this application there.}

The second significant implication is that {\em public} randomness an be used to produce private randomness,  as long
as the public randomness is uniform to the untrusted devices. This implication holds for both random number generation and
key distribution. A specific scenario that this implication can be of significant practical value is the following.
The NIST Randomness Beacon project~\cite{NIST}
aims to broadcast true randomness to the public. Since the bits become known after broadcast, one cannot use them
directly for cryptographic applications. However, as long as one is willing to assume that the public randomness
is uniform with respect to the untrusted-devices, it can be used securely to generate private randomness.
A related yet subtly different application is that in adapting the Miller-Shi randomness expansion protocol~\cite{MS} 
for key distribution, the Lemma allows the use of locally generated uniform randomness as the initial seed, despite
the original expansion protocol requiring  global randomness.

\paragraph{Physics Implications.}
\commentout{
Our result implies that unless the world is deterministic, we can in principle create arbitrarily many events and be confident
that their joint distribution is close to uniform. This rules out a ``weak randomness world,'' where
randomness exists in Nature but not in a large and close-to-uniform scale~\cite{smaller_k}.
The previous such dichotomy statements~\cite{ColbeckR:2012, Gallego:2012} 
model weak randomness in Nature by the highly structured and highly random SV-sources~\cite{SV-source},
together with a structural or conditional independence assumption. 
We remove those  assumptions. Our dichotomy statement is asymptotically optimal
in the sense that it requires only a constant, as opposed to a linear, amount of uncertainty for certifying
unbounded output randomness.
}

Our result  provides an approach for mitigating the ``freedom-of-choice'' loophole in Bell test experiments
for refuting hidden local variable theories. Such experiments require
the choice of the measurement settings to be nearly uniformly distributed. By using the output of our protocol,
those experiments remain sound even when
only extremely weak source of randomness is available. We can thus consider the composition of the protocol
with the subsequent Bell tests as a combined test for refuting local hidden variable theory
that, unlike the standard Bell tests, needs only a weak random source for choosing the experiment settings.

This version of our paper focuses on quantum-security. We assume the completeness of quantum mechanics, which means that
the adversary cannot obtain any information other through quantum operations.
A planned update of this work will address non-signaling security.


\section{Preliminaries} \label{sec:prelim}
We assume familiarity with the standard concepts from quantum
information and summarize our notation as follows.

\begin{trivlist}

\item \textbf{Quantum States.} We only consider finite dimensional Hilbert spaces as quantum states in infinite dimensions can be 
truncated to be within a finite dimensional space with an arbitrarily small error. The state space $\A$ of  $m$-qubit is the complex Euclidean space $\complex^{2^m}$. An $m$-qubit quantum state is represented by a density operator $\rho$, i.e., a positive semidefinite operator over $\A$ with trace $1$. The set of all quantum states in $\A$ is denoted by $\density{\A}$. 

The Hilbert-Schmidt inner product on $\lin{\A}$, the operator space of $\A$, is defined by $\ip{X}{Y}=\tr (X^*Y)$,  for all $X,Y \in \lin{\A}$, where $*$ is the adjoint operator. Let $\I_\X$ denote the identity operator over $\X$, which might be omitted
from the subscript if it is clear in the context. 
An operator $U \in \lin{\X}$ is a unitary if $UU^*=U^*U=\I_\X$. The set unitary operations over $\X$ is denoted by $U(\X)$.

For a multi-partite state, e.g. $\rho_{ABE} \in \density{\A \ot \B \ot \E}$, its reduced state on some subsystem(s) is represented by the same state with the corresponding subscript(s). For example, the reduced state on $\A$ system of $\rho_{ABE}$
is $\rho_A=\tr_{\B\E}(\rho_{ABE})$, and $\rho_{AB}=\tr_{\E}(\rho_{ABE})$.  When all subscript letters are omitted, the notation
represents the original state (e.g., $\rho=\rho_{ABE}$).

A classical-quantum-, or cq-state $\rho \in \density{\A \ot \B}$ indicates that the $\A$ subsystem is classical and $\B$ is quantum. Likewise for ccq-, etc., states.

We use $\ket{\psi}$ to denote the density operator (i.e., $\ketbra{\psi}$) for a pure state $\ket{\psi}$ when it is clear from the context. Use $\uniform{A}$ to denote the completely mixed state on a space $\A$, i.e., $\uniform{A}=\frac{1}{\dim(\A)} \I_\A$.

\item \textbf{Norms.} For any $X \in \lin{\A}$ with singular values $\sigma_1,\cdots, \sigma_d$, where $d=\dim(\A)$, the trace norm of $\A$ is $\trnorm{X}=\sum_{i=1}^d \sigma_i$.
The trace distance between two quantum states $\rho_0$ and $\rho_1$ is $\trnorm{\rho_0-\rho_1}$. 
Their \emph{ fidelity},  denoted by $\fid(\rho_0, \rho_1)$), is
\begin{equation} \label{prelim:eqn:fidelity}
  \fid(\rho_0, \rho_1)=\trnorm{\sqrt{\rho_0}\sqrt{\rho_1}}.
\end{equation}

The trace distance and the fidelity satisfy the following relations.

\begin{lemma}[Fuchs-van de Graaf] \label{prelim:lem:van_de_Graaf}
For any $\rho_0,\rho_1 \in \density{\A}$, we have
\begin{equation} \label{chap_math:def:eqn_van_de_Graaf}
 1-\frac{1}{2}\trnorm{\rho_0-\rho_1} \leq \fid(\rho_0,\rho_1) \leq
 \sqrt{1-\frac{1}{4}\trnorm{\rho_0-\rho_1}^2}.
\end{equation}
\end{lemma}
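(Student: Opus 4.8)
The plan is to prove the two inequalities in the statement separately. I would obtain the right-hand bound $\fid(\rho_0,\rho_1)\le\sqrt{1-\tfrac14\trnorm{\rho_0-\rho_1}^2}$ by reducing to pure states via Uhlmann's theorem, and the left-hand bound $1-\tfrac12\trnorm{\rho_0-\rho_1}\le\fid(\rho_0,\rho_1)$ directly from the definition $\fid(\rho_0,\rho_1)=\trnorm{\sqrt{\rho_0}\sqrt{\rho_1}}$ together with one non-elementary norm inequality.

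For the upper bound I would first settle the pure-state case. For unit vectors $u,v$ the operator $\ketbra{u}-\ketbra{v}$ is Hermitian, traceless, and supported on $\mathrm{span}\{u,v\}$, so it has eigenvalues $\pm\mu$; from $2\mu^2=\tr\big((\ketbra{u}-\ketbra{v})^2\big)=2-2|\ip{u}{v}|^2$ one reads off $\trnorm{\ketbra{u}-\ketbra{v}}=2\sqrt{1-|\ip{u}{v}|^2}$, while $\fid(\ketbra{u},\ketbra{v})=|\ip{u}{v}|$, so the inequality is an equality for pure states. Now, by Uhlmann's theorem, pick purifications $\ket{\psi_0},\ket{\psi_1}$ of $\rho_0,\rho_1$ in a common space with $|\ip{\psi_0}{\psi_1}|=\fid(\rho_0,\rho_1)$. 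Since tracing out the purifying register is a CPTP map and the trace norm does not increase under CPTP maps, $\trnorm{\rho_0-\rho_1}\le\trnorm{\ketbra{\psi_0}-\ketbra{\psi_1}}=2\sqrt{1-\fid(\rho_0,\rho_1)^2}$, and squaring and rearranging gives the claim.

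For the lower bound I would argue as follows. Since $\trnorm{A}\ge|\tr A|$ for any operator $A$, and $\tr\big(\sqrt{\rho_0}\sqrt{\rho_1}\big)\ge 0$ because the Hilbert--Schmidt inner product of two positive operators is nonnegative, we get $\fid(\rho_0,\rho_1)=\trnorm{\sqrt{\rho_0}\sqrt{\rho_1}}\ge\tr\big(\sqrt{\rho_0}\sqrt{\rho_1}\big)$. Expanding $\tr\big((\sqrt{\rho_0}-\sqrt{\rho_1})^2\big)=\tr\rho_0+\tr\rho_1-2\tr\big(\sqrt{\rho_0}\sqrt{\rho_1}\big)=2-2\tr\big(\sqrt{\rho_0}\sqrt{\rho_1}\big)$ then yields $\fid(\rho_0,\rho_1)\ge 1-\tfrac12\snorm{\sqrt{\rho_0}-\sqrt{\rho_1}}_2^2$, where $\snorm{\cdot}_2$ is the Hilbert--Schmidt norm. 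It remains to invoke the Powers--Størmer inequality $\snorm{\sqrt{\rho_0}-\sqrt{\rho_1}}_2^2\le\trnorm{\rho_0-\rho_1}$.

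The hard part is exactly this Powers--Størmer step, the one ingredient that is not a one-line manipulation; I would either cite it or prove it via the Jordan decomposition $\rho_0-\rho_1=R-T$ with $R,T\ge0$ and $RT=0$, exploiting that $W:=\rho_0+T=\rho_1+R$ dominates both $\rho_0$ and $\rho_1$. As an alternative that avoids both Uhlmann and Powers--Størmer, both inequalities follow uniformly from a ``classical reduction'': any measurement $\{E_m\}$ produces outcome distributions $p,q$ with $\tfrac12\sum_m|p_m-q_m|\le\tfrac12\trnorm{\rho_0-\rho_1}$ (total-variation distance is non-increasing under measurement) and, by the Fuchs--Caves characterization, $\fid(\rho_0,\rho_1)\le\sum_m\sqrt{p_mq_m}$ with equality for some measurement; combining this with the elementary classical estimates $1-\tfrac12\sum_m|p_m-q_m|\le\sum_m\sqrt{p_mq_m}$ (for the minimizing measurement) and, for a two-outcome distribution, $\sum_m\sqrt{p_mq_m}\le\sqrt{1-\big(\tfrac12\sum_m|p_m-q_m|\big)^2}$ (for the Helstrom measurement---the projection onto the positive part of $\rho_0-\rho_1$, for which the outcome total-variation distance equals $\tfrac12\trnorm{\rho_0-\rho_1}$) recovers the left and right inequalities respectively. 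I would present the Uhlmann/Powers--Størmer route as primary, since it stays closest to the paper's definition of fidelity.
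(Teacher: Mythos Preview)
Your proof is correct, but there is nothing in the paper to compare it against: Lemma~\ref{prelim:lem:van_de_Graaf} is stated in the preliminaries as a well-known fact (attributed to Fuchs and van de Graaf) and is not proved in the paper. Both your primary route (Uhlmann for the upper bound, Powers--St{\o}rmer for the lower) and the alternative classical-reduction route via the Fuchs--Caves characterization are standard and sound; either would be acceptable as a self-contained proof of this preliminary lemma.
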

The fidelity between subsystems of quantum states cancan be preserved in the following sense.
\begin{lemma}[Folklore]
\label{prelim:lem:fidelity}
Let $\rho,\xi\in \density{\A}$ and $\rho'\in \density{\A \ot \B}$ be density operators with $\tr_{\B}{\rho'}=\rho$.
There exists a density operator $\xi'\in\density{\A \ot \B} $ with $ \tr_{\B}{\xi'}=\xi$ and $\fid(\rho',\xi')=\fid(\rho,\xi)$.
\end{lemma}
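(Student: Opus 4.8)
The plan is to prove both inequalities $\fid(\rho',\xi')\le\fid(\rho,\xi)$ and $\fid(\rho',\xi')\ge\fid(\rho,\xi)$ for a suitably constructed $\xi'$, using two standard facts about the fidelity in the convention of \eqref{prelim:eqn:fidelity}: its monotonicity under quantum channels, and \textbf{Uhlmann's theorem} in the form $\fid(\sigma_0,\sigma_1)=\max|\langle\psi_0|\psi_1\rangle|$, the maximum being over purifications of $\sigma_0,\sigma_1$ into a common space. Combining the two bounds forces equality.

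The upper bound is immediate and in fact holds for \emph{every} extension of $\xi$: if $\xi'\in\density{\A\ot\B}$ satisfies $\tr_\B\xi'=\xi$, then the partial trace $\tr_\B$ is a quantum channel sending $\rho'\mapsto\rho$ and $\xi'\mapsto\xi$, so monotonicity of fidelity gives $\fid(\rho',\xi')\le\fid(\tr_\B\rho',\tr_\B\xi')=\fid(\rho,\xi)$. Hence it suffices to produce one extension attaining this value.

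For the construction I would introduce an auxiliary reference system $\R$ and let $\ket{\Psi}\in\A\ot\B\ot\R$ be a purification of $\rho'$. Since $\tr_\B\rho'=\rho$, the \emph{same} vector $\ket{\Psi}$ is also a purification of $\rho$, now with the larger environment $\B\ot\R$. Applying Uhlmann's theorem to the pair $(\rho,\xi)$ on system $\A$ with purifying space $\B\ot\R$ yields a purification $\ket{\Phi}\in\A\ot\B\ot\R$ of $\xi$ with $|\langle\Psi|\Phi\rangle|=\fid(\rho,\xi)$. I then set $\xi'\defeq\tr_\R\ketbra{\Phi}$; tracing out $\R$ shows $\xi'\in\density{\A\ot\B}$, and $\tr_\B\xi'=\tr_{\B\R}\ketbra{\Phi}=\xi$, so $\xi'$ is a valid extension. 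Finally, reading $\ket{\Psi}$ and $\ket{\Phi}$ as purifications of $\rho'$ and $\xi'$ on $\A\ot\B$ with the common environment $\R$, Uhlmann's theorem (as a maximization) gives $\fid(\rho',\xi')\ge|\langle\Psi|\Phi\rangle|=\fid(\rho,\xi)$, which matches the upper bound.

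The only point requiring care — and the conceptual crux — is the dual role of the single vector $\ket{\Psi}$: it purifies the extension $\rho'$ relative to the small environment $\R$ while \emph{simultaneously} purifying the marginal $\rho$ relative to the enlarged environment $\B\ot\R$. This is exactly what lets Uhlmann's theorem transfer the optimal overlap computed at the $(\rho,\xi)$ level down to the $(\rho',\xi')$ level; everything else is routine once monotonicity supplies the reverse inequality.
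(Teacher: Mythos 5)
Your proof is correct: the paper states this lemma as folklore and gives no proof of its own, and your monotonicity-plus-Uhlmann argument is precisely the standard one (the paper even invokes Uhlmann's theorem in the same spirit later, in the proof of the Equivalence Lemma). The only implicit point worth noting is that the purifying space $\B\ot\R$ must be large enough to admit a purification of $\xi$, which is harmless since $\R$ can always be chosen with $\dim(\R)\ge\dim(\A\ot\B)$.
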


\item \textbf{Quantum Operations}. Let $\X$ and $\Y$ be state spaces.
A {\em super-operator} from $\X$ to $\Y$ is a linear map
\begin{equation}
  \Psi : \lin{\X} \rightarrow \lin{\Y}.
\end{equation}
Physically realizable \emph{quantum operations} are represented by \emph{admissible} super-operators,
which are completely positive and trace-preserving.
Thus any quantum protocol can be viewed as an admissible super-operator. We shall use this abstraction in our analysis and make use of the following observation.

\begin{fact}[Monotonicity of trace distances] \label{prelim:fact:monotone_trace}
For any admissible super-operator $\Psi: \lin{\X}\rightarrow \lin{\Y}$ and $\rho_0,\rho_1\in \density{\X}$, we have
\begin{equation}
 \trnorm{\Psi(\rho_0)-\Psi(\rho_1)}\leq \trnorm{\rho_0-\rho_1}.
\end{equation}
\end{fact}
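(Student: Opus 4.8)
The plan is to prove this standard data-processing inequality through the dual (variational) characterization of the trace norm, which keeps the argument self-contained and avoids invoking a Stinespring dilation. The first step is to record that for every Hermitian operator $Y \in \lin{\Y}$,
\[
\trnorm{Y} = \max\{\tr(MY) : M = M^*,\ -\I \preceq M \preceq \I\},
\]
with the maximum attained at $M = P_+ - P_-$, the difference of the spectral projectors onto the positive and negative eigenspaces of $Y$. This follows by diagonalizing $Y = \sum_i \lambda_i \ketbra{e_i}$: for any Hermitian $M$ with $-\I \preceq M \preceq \I$ every diagonal entry $\bra{e_i} M \ket{e_i}$ lies in $[-1,1]$, so $\tr(MY) = \sum_i \lambda_i \bra{e_i} M \ket{e_i} \le \sum_i \abs{\lambda_i} = \trnorm{Y}$, with equality for the sign operator $P_+ - P_-$.

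Next I would set $Y = \Psi(\rho_0) - \Psi(\rho_1) = \Psi(\rho_0 - \rho_1)$, which is Hermitian, and fix an optimal $M$ so that $\trnorm{Y} = \tr(MY)$. Moving $\Psi$ across the Hilbert--Schmidt pairing $\ip{X}{Y} = \tr(X^*Y)$ fixed in the Preliminaries gives $\tr(MY) = \tr(\Psi^*(M)(\rho_0 - \rho_1))$, where $\Psi^*$ is the adjoint super-operator (note $\Psi^*$ preserves Hermiticity, so $\Psi^*(M)$ is again Hermitian). The crux is to check that $\Psi^*$ maps the operator interval $[-\I_\Y, \I_\Y]$ into $[-\I_\X, \I_\X]$. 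Since $\Psi$ is trace-preserving, its adjoint is unital, $\Psi^*(\I_\Y) = \I_\X$; and since $\Psi$ is completely positive (mere positivity already suffices here), $\Psi^*$ is positive. Hence from $-\I \preceq M \preceq \I$ we get $\Psi^*(\I - M) \succeq 0$ and $\Psi^*(\I + M) \succeq 0$, i.e. $-\I_\X \preceq \Psi^*(M) \preceq \I_\X$.

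Setting $M' = \Psi^*(M)$ and invoking the variational characterization a second time, now for the Hermitian operator $\rho_0 - \rho_1$, yields
\[
\trnorm{\Psi(\rho_0) - \Psi(\rho_1)} = \tr(M'(\rho_0 - \rho_1)) \le \max_{-\I \preceq M'' \preceq \I} \tr(M''(\rho_0 - \rho_1)) = \trnorm{\rho_0 - \rho_1},
\]
which is exactly the claim. I expect the only genuinely delicate point to be the interval-preservation property of $\Psi^*$: one must be careful that trace-preservation of $\Psi$ is precisely what gives unitality of $\Psi^*$, and that positivity of $\Psi^*$ (inherited from complete positivity of $\Psi$) is what lets it respect the semidefinite order, so that applying it to $\I \pm M$ stays positive. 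Everything else is routine bookkeeping with the pairing and the spectral decomposition.
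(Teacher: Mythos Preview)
Your argument is correct. The variational characterization of the trace norm, the passage to the adjoint super-operator via the Hilbert--Schmidt pairing, and the observation that trace-preservation of $\Psi$ gives unitality of $\Psi^*$ while (complete) positivity of $\Psi$ gives positivity of $\Psi^*$ --- all of this is sound, and together they yield the inequality cleanly.

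There is nothing to compare against in the paper itself: the statement is recorded there as a \emph{Fact} in the Preliminaries without proof, treated as a standard background result (as indeed it is --- this is the data-processing inequality for trace distance, found in any quantum information textbook, e.g.\ via a Stinespring dilation argument or the Helstrom/measurement interpretation). Your proof via the dual formulation is one of the standard self-contained routes and is entirely adequate.
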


A unitary operation $U \in U(\X)$ is a special type of admissible quantum operations that are \emph{invertible}. For any unitary $U$, its corresponding super-operator $\Psi_U$ is defined as 
\begin{equation}
   \Psi_U(\cdot)= U \cdot U^\dagger.
\end{equation}

Let $\{ \ket{i} : 1\le i\le \dim(\X)\}$ be the computational basis for $\X$.
An $\X$-controlled unitary on $\Y$ is a unitary 
 $U \in U(\X \ot \Y)$ such that 
for some  $U_{i}\in U(\Y)$, $1\le i\le \dim(\X)$,
\begin{equation}
  U =\sum_{1\le i\le\dim(\X)} \ketbra{i} \ot U_{x_i}.
\end{equation}
Likewise define an $\X$-{\em controlled} admissible quantum operation $T$
from $\K$ to $\L$ as an admissible quantum operation such that for some admission
quantum operations $T_i:L(\K)\to L(\L)$, $1\le i\le\dim(\X)$
\begin{equation} 
T = \sum_{1\le i\le \dim(\X)} \langle i|\cdot |i\rangle|i\rangle\langle i|\otimes T_i(\cdot).
\end{equation}


\item \textbf{Min-entropy}.
For a cq state $\rho_{XE}$, the amount of \emph{extractable} randomness (from $X$ against $E$) is characterized by its \emph{(smooth) conditional min-entropy}.
\begin{definition}[conditional min-entropy] \label{SW:def:min-entropy}
Let $\rho_{XE} \in \density{\X\ot \E}$. The \emph{min-entropy} of $X$ conditioned on $E$ is defined as
  \begin{equation*}
    \Hmin({X|E})_\rho \defeq \max \{\lambda
    \in \mathbb{R} :  \exists \sigma_E \in \density{\E}, \mathrm{s.t.}\,\, 2^{-\lambda} \I_X \ot \sigma_E \geq \rho_{XE}\}.
  \end{equation*}
\end{definition}


\end{trivlist}


\section{Formal definitions of Physical Randomness Extractors} \label{sec:formal_model}
We now proceed with formal definitions. We first formalize the notion of physical systems.
By an input or output, we mean a finite length binary string.

A {\em quantum device} $D$ is a Hilbert space, also denoted by $D$,
together with an admissible quantum operation, called its {\em device operation}, which takes a 
classical input, conditioned on which applies a quantum operation on $D$, then produces a classical output.
A \emph{physical system} $\QPS=(\Abs, \vecPhy, \Adv)$ consists of
three disjoint subsystems:
a source  $\Abs$, which is always classical, $t$ quantum devices $\vecPhy=(\Phy_1,\cdots, \Phy_t)$, 
for some $t\ge0$, and a quantum adversary $\Adv$.
We write $\QPS=\QPS(\rho, \{\Alg_{\Phy_i}\})$ to denote that the device operations are $\{\Alg_{\Phy_i}\}$
and the system state is currently $\rho$. Likewise for writing $\QPS=\QPS(\rho)$.
Note that the assumption of no-communication among the devices is formally captured by that each device algorithm $\Alg_{\Phy_i}$ operates only on its corresponding space $\Phy_i$.

As randomness is relative, we will say that in a multi-partite state, a certain classical component has a certain min-entropy {\em (with respect) to another component}.
Similarly we add a scope of subsystems to which a certain classical component is an $(n, k)$-source  or uniformly distributed.
If the scope is the rest of the system, we refer to it as ``global.''
With those conventions, we quantify the min-entropy of a physical system below.

A physical system $\QPS(\rho, \{\Alg_{\Phy_i} \})$ is an $(n,k,t,m)$-physical system with a random-to-devices source if
(1) $\Abs$ is an $(n,k)$-source to the devices, and (2) Each device $\Phy_i$ can only output at most $m$ bits in total or any additional bit of output will not be used.

By replacing ``random-to-devices'' with {\em globally random}, {\em uniform-to-devices}, and {\em globally uniform}, we
define  the corresponding physical systems similarly. For the latter two cases, we omit the min-entropy 
to call $\QPS$ an $(n, t, m)$-physical system.
We now define the syntax of physical randomness extractors.
\begin{definition}[Physical Randomness Extractor]  \label{def:Phy_Ext}
A physical randomness extractor $\PExt$ for a physical system $\QPS(\Abs, \vecPhy, \Adv)$  is a classical deterministic algorithm that
conditioned on $X$, classically interacts with the devices by invoking the device operations, and finally outputs a decision bit $A\in \{0, 1\}$, where $0$ is for rejecting and $1$ for accepting,
and an output string $Z \in \zo^*$ to the corresponding registers ${\DecisionSystem} \ot {\Zsystem}$.
(See Fig.~\ref{fig:phy_ext}.)

The {\em extraction operation} 
\begin{equation}
 \Phi_{\PExt} : \mathrm{L}(\Abs \ot \vecPhy) \rightarrow \mathrm{L}({\DecisionSystem} \ot {\Zsystem} \ot \Abs \ot \vecPhy)
\end{equation}  
is the $X$-controlled admissible operation from $\vecPhy$ to ${\DecisionSystem}\ot{\Zsystem}\ot\vecPhy$ induced by the composition of $\PExt$
and the device operations. 
\end{definition}

When discussing post-extraction states, it will be convenient to say that $\QPS$ is equipped with the registers $\DecisionSystem \ot {\Zsystem}$,
and denote the extended physical system by $\QPS_{{\DecisionSystem}{\Zsystem}}$. Denote by $\AcceptEvent(\PExt,\QPS)$ the event that $\PExt$ accepts when applied to $\QPS$. 

In order to discuss the quality of a PRE, we need the following relative notions of (approximate) uniform distribution.
For an $\epsilon\in[0,2]$, we say that $X$ is $\epsilon$-uniform-to-E in a cqq state $\rho_{XEE'}$ if there exists a $\rho'_{XEE'}$ where $X$ is uniform-to-$E$
and $\|\rho_{XEE}-\rho'_{XEE}\|_{\tr}\le \epsilon$.
Let $\QPS_{{\DecisionSystem}{\Zsystem}}$ be a physical system equipped with the decision-output registers ${\DecisionSystem}\ot {\Zsystem}$, and it is
in a state $\gamma=\gamma_{{\DecisionSystem}{\Zsystem}X{\vecPhy}E}$.
Denote by $\gamma^{\AcceptEvent}=\gamma^{\AcceptEvent}_{{\DecisionSystem}{\Zsystem}X{\vecPhy}E}$ the (subnormalized) 
projection of $\gamma$ to the $A=1$ subspace.

\def\Honest{\textrm{Honest}}
\def\All{\textrm{All}}
\def\PS{\mathbb{S}}

We will discuss noise model abstractly, i.e. independent of the technology implementing the devices.
\begin{definition}[Implementation] An {\em implementation} of devices  $\vecPhy=(\Phy_1, \cdots, \Phy_{\ndevice})$ is a device state $\rho_{\vecPhy}$
together with the device operations $\{\Alg_{\Phy_i}\}_{1\le i\le t}$. 
\end{definition}

For a physical system $\QPS$ over a set $D$ of devices, denote by $\Pi(\QPS)$ its implementation. If $D'$ is a subset of $D$,
denote by the restriction of $\Pi$ to $D'$ by $\Pi_{D'}$.

Recall that a {\em premetric} on a set $A$ is a function $\delta: A\times A \to \mathbb{R}$ such that
$\delta(a, a')\ge0$ and $\delta(a,a)=0$, for all $a,a'\in A$.
We require a noise model to be reasonable in that the noise of a larger system is no less than the noise in a smaller system.

\begin{definition}[Noise Model]
A {\em noise model}  is a premetric on implementations that takes values in $[0,1]$ and is non-increasing
under taking device restrictions. 
\end{definition}

More precisely, let $\delta$ be a noise model, $t\ge0$ be an integer, $D$ be a set of devices,  and $\Pi$ and $\Pi'$ be two implementations of $D$. Then (1) $\delta(\Pi, \Pi') \in [0, 1]$, (2) $\delta(\Pi, \Pi')=0$ if $\Pi=\Pi'$, and (3) ({\em Reasonable Property})~\cite{reasonable}.
If $D'$ is a subset of $D$, then
$\delta(\Pi_{D'}, \Pi_{D'}')\le\delta(\Pi, \Pi')$.

To define the soundness error of a PRE, we need to define that of 
a post-extraction state.
Let $\gamma$ be a post-extraction state described above. 
A subnormalized state $\alpha=\alpha_{{\DecisionSystem}{\Zsystem}X{\vecPhy}E}$ is called {\em ideal} if
$A=1$ (i.e. $\alpha=|1\rangle\langle1|_{\DecisionSystem}\otimes
\alpha_{{\Zsystem}X{\vecPhy}E}$) and ${\Zsystem}$ is uniform to $XE$.
We say that $\gamma$ has a soundness error $\epsilon$ if there exists an 
ideal post-extraction state $\alpha$ such that 
$\trnorm{ \gamma^{\AcceptEvent} - \alpha }\le \epsilon$.
We are now ready to define properties of physical randomness extractors.
\begin{definition}[Soundness, Completeness, and Robustness of a PRE] \label{def:PRE_para}
Let $\PS$ be a non-empty set of physical systems, $\delta$ a noise model,
$\eta\in[0,1]$, and $\PExt$ a PRE. We say that $\PExt$ is an {\em untrusted-device} PRE for $\PS$
and has a completeness error $\eps_c$ tolerating an $\eta$ level of noise, 
and a soundness error $\eps_s$, if the following completeness and soundness properties
hold. 
\begin{itemize}
\item (Completeness) There exists an implementation $\Pi^*$, referred to as the {\em ideal} implementation,
in the implementations of $\PS$,
such that for all $\QPS\in\PS$ whose implementation $\Pi$ satisfies $\delta(\Pi, \Pi^*)\le \eta$,
$\Prob[ \AcceptEvent(\PExt,\QPS)] \geq 1- \eps_c$.
\item (Soundness) For any $\QPS(\rho)\in\PS$, $\Phi_{\PExt}(\rho)$ has a soundness error $\le \eps_s$.
\end{itemize}

We further call $\PExt$ a {\em random-to-devices $(n, k, t, m)$-PRE,} 
for integers $n,k,t,m\ge0$, if $\PS$ is the set of all $(n, k, t, m)$-physical sources with a 
random-to-devices source. Likewise define the notions of a {\em random-to-all $(n, k, t, m)$-PRE},  
a {\em seeded} $(n, t, m)$-\PRE with a {\em uniform-to-devices} seed, and a {\em seeded} $(n, t, m)$-\PRE
with a {\em uniform-to-all} seed. 
If $N$ is the (maximum) output length of $\PExt$, the {\em (extraction) rate} of $\PExt$ is $N/(mt)$.
\end{definition}

\noindent Note that our soundness definition requires the output to be uniform with respect to both the source $\Abs$ and the adversary $\Adv$, which implies that the randomness $\PExt$ extracts is from the devices $\vecPhy$. 

Previous works~(e.g. \cite{VV12}) define $\epsilon$ to be a soundness error if either the protocol accepts with $\le\epsilon$ probability or
the state conditioned on accepting has the desired amount of randomness. 
While our definition is essentially equivalent, syntactically it has several new and subtle features that greatly
simplify the analysis of PRE compositions. First, a single inequality for the definition avoids the otherwise necessary
argument about conditional property (conditioned on accepting). Second,
use the whole state, as opposed to tracing out the device component, in calculating the distance to an ideal state (that has a uniform $X$).
Third, the ideal state in comparison does not need to have the same Adversary subsystem. Those features allow
the application of triangle inequality to  the case of perturbed input state and consequently, composed protocols.

\vspace{1mm} \noindent \textbf{Previous randomness expansion protocols seen as seeded-PREs.} 
By definitions, randomness expansion protocols~\cite{col:2006,ColbeckK:2011,VV12,MS} are precisely
seeded PREs with uniform-to-all seeds. There has been a large body of research on randomness expansion protocols since~\cite{col:2006}.
Our framework allows deeper quantitative analyses and comparisons of their performances.

Phrased in our framework,  Vazirani-Vidick~\cite{VV12} showed that a quantum-secure $2$-device PRE needs only
a poly-logarithmic seed length (measured against the output length) and can achieve an inverse polynomial extraction rate
and an inverse polynomial error (in the output length).
The concurrent work of Miller-Shi~\cite{MS} significantly improved the rate to be {\em linear}, and the error to be
negligible (inverse quasi-polynomial), besides adding the constant-noise robustness feature.
Another concurrent work of Coudron and Yuen~\cite{CY} reduces the seed length to a \emph{constant}
at an inverse polynomial rate using $8$ devices.Finally, combining our technique (Equivalence Lemma in the next section) with Miller-Shi~\cite{MS} in a straightforward matter, we can achieve simultaneously a linear rate, a constant noise robustness and a constant seed length.

\section{Results}\label{sec:result}
In this section, we will present the precise statements of the Equivalence Lemma and the Main Theorem with
the explicit construction of our main protocol and the necessary tools for completing our analysis. We leave
all the proofs in the Appendix.

Our analysis uses the seeded PRE as a black-box. This differs from previous analyses that
rely on the details of the untrusted-device protocols (e.g., those for randomness expansion such as~\cite{VV12,MS} or for other tasks, such as delegation of quantum computation~\cite{ruv:2013} and certifying strong monogamy~\cite{Mas09, CSW:QIP}). 

We fix a noise model, and refer to the triple $(\epsilon_c, \eta, \epsilon_s)$ 
of a completeness error $\epsilon_c$, a noise level tolerated $\eta$,
and a soundness error $\epsilon_s$ as the {\em performance parameters}.
We shall prove the lemma below. While the proof is short, the result may appear surprising or even counter intuitive.

\begin{lemma}[Equivalence Lemma]  \label{lem:EL}
Any seeded PRE for uniform-to-all seeds is also a seeded PRE for uniform-to-devices seeds with the same performance
parameters under the same noise model.
\end{lemma}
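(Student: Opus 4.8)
The plan is to prove the Equivalence Lemma by a direct reduction that exploits the $X$-controlled structure of the extraction operation. The key observation is that a seeded PRE $\PExt$ is, by Definition~\ref{def:Phy_Ext}, a $\vecPhy$-side operation controlled by the classical seed $X$: the extraction operation $\Phi_{\PExt}$ acts as $\sum_{x} |x\rangle\langle x|_X \otimes \Phi_x$, where $\Phi_x$ is an admissible operation on $\vecPhy$ alone (producing the registers $\DecisionSystem \otimes \Zsystem$). Because $\PExt$ never touches the adversary $\Adv$ and never writes back onto $X$, its action is completely determined by the joint state of $(X,\vecPhy)$ and is ``blind'' to whatever correlations $X$ has with $\Adv$. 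So the first step is to make this structural fact precise: fix the ideal implementation $\Pi^*$ and any implementation $\Pi$ with $\delta(\Pi,\Pi^*)\le\eta$, and write both the completeness probability $\Prob[\AcceptEvent(\PExt,\QPS)]$ and the post-extraction state $\Phi_{\PExt}(\rho)$ as functions of the reduced state $\rho_{X\vecPhy}$ only (the latter up to the untouched $E$-side, which is carried along by Lemma~\ref{prelim:lem:fidelity}-style purification/extension arguments).

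Second, I would set up the comparison between the uniform-to-all case and the uniform-to-devices case. Let $\QPS(\rho)$ be a physical system where $X$ is uniform-to-devices; by definition of ``uniform-to-devices'' there is a nearby state — actually in the exact (non-$\epsilon$) version of the lemma $X$ is genuinely uniform to $\vecPhy$ — in which $\rho_{X\vecPhy} = \uniform{X}\otimes\rho_{\vecPhy}$, exactly as in a uniform-to-all system. The point is that ``$X$ uniform to $\vecPhy$'' and ``$X$ uniform to everything'' impose the \emph{same} constraint on $\rho_{X\vecPhy}$, namely $\rho_{X\vecPhy}=\uniform{X}\otimes\rho_{\vecPhy}$; the difference lives entirely in the $X$–$E$ correlations, which $\Phi_{\PExt}$ ignores. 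Hence: (i) for \textbf{completeness}, the acceptance probability depends only on $\rho_{X\vecPhy}$ and the device operations, so it is literally unchanged; any $\Pi$ within noise $\eta$ of $\Pi^*$ accepts with probability $\ge 1-\epsilon_c$ in both settings. (ii) For \textbf{soundness}, I would argue that the soundness error is determined by $\Phi_{\PExt}(\rho_{X\vecPhy\otimes E})$; given a guarantee $\trnorm{\gamma^{\AcceptEvent}-\alpha}\le\epsilon_s$ for the uniform-to-all case with \emph{some} ideal $\alpha$ (whose adversary subsystem we are free to choose, per the paper's remark on the definition), the same bound transfers because the extraction operation commutes with any operation on $E$, and trace distance is monotone (Fact~\ref{prelim:fact:monotone_trace}) and invariant under the $E$-side isometry relating the two adversary configurations. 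Concretely: take the uniform-to-devices state $\rho$, note its $(X,\vecPhy)$-marginal equals that of a uniform-to-all state $\sigma$, apply an adversary-side channel $T_E$ with $T_E(\sigma)=\rho$ on the relevant marginals, and push the ideal state for $\sigma$ through $\operatorname{id}_{\DecisionSystem\Zsystem X\vecPhy}\otimes T_E$ to get an ideal state for $\rho$ with the distance preserved since $\Phi_{\PExt}$ and $T_E$ act on disjoint registers.

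Third, I would handle the bookkeeping around the definitions: the ideal implementation $\Pi^*$ is the same object in both statements (it is a device-side object, so the reasonable/non-increasing noise property of $\delta$ plays no role here beyond what is already built in), and the claim ``$\Zsystem$ uniform to $XE$'' in the ideal post-extraction state is preserved under the $E$-side channel precisely because uniformity-to-$E$ is closed under post-processing of $E$. The main obstacle — and the only place any real care is needed — is the soundness direction: one must be careful that the witness ideal state $\alpha$ for the uniform-to-all case can be transported to a valid ideal state for the uniform-to-devices case \emph{without} increasing the trace distance, and this hinges on the three ``new and subtle features'' the authors flagged after Definition~\ref{def:PRE_para} (single inequality, whole-state distance, adversary subsystem need not match). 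If one insisted on matching adversary subsystems the argument would break; as stated it goes through cleanly. Everything else is an application of monotonicity of trace distance under admissible super-operators and the elementary fact that a controlled operation on $(X,\vecPhy)$ is insensitive to the purifying/adversarial side.
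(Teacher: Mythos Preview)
Your overall strategy matches the paper's: reduce the uniform-to-devices case to the uniform-to-all case via a map on the adversary side that converts some globally uniform $\sigma$ into the given device-uniform $\rho$, then use that $\Phi_{\PExt}$ commutes with this map and that trace distance is monotone. The completeness argument is correct and identical to the paper's.

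The soundness argument, however, has a real gap. You write the transporting map as $\operatorname{id}_{\DecisionSystem\Zsystem X\vecPhy}\otimes T_E$, i.e., a channel acting on $E$ alone, independent of $X$. Such a channel cannot exist in general: if $\sigma = \uniform{X}\otimes\sigma_{\vecPhy E'}$ is globally uniform, then $(\I_{X\vecPhy}\otimes T_E)(\sigma) = \uniform{X}\otimes(\I_{\vecPhy}\otimes T_E)(\sigma_{\vecPhy E'})$ is still globally uniform, so it can never equal a device-uniform $\rho$ with nontrivial $X$--$E$ correlations. The paper resolves this by allowing the map $M$ to be an $X$-\emph{controlled} operation on $E$, and proving its existence via Uhlmann's theorem: purify each conditional state $\rho^x_{\vecPhy E}$ to $|\phi^x_{\vecPhy EE'}\rangle$, note that the device-uniform condition forces all the marginals $\phi^x_{\vecPhy}$ to coincide, and conclude that unitaries $U^x$ on $EE'$ relate these purifications; then $M$ applies $U^x$ conditioned on $X=x$ and traces out $E'$. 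With $M$ $X$-controlled, commutation with $\Phi_{\PExt}$ still holds (both are $X$-controlled and act on disjoint quantum registers), and ``$Z$ uniform to $XE$'' in the ideal state is preserved under $M$ because $M$, regarded as a channel on $XE$, does not touch $Z$. Your allusions to ``purification/extension arguments'' and an ``$E$-side isometry'' gesture at the right mechanism, but the $X$-dependence of the map is not optional and is exactly the content of the paper's second Proposition; without it your reduction does not go through.
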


Our proof consists of two steps, the first of which is general and the second is specific for the setup of the Lemma.
For a set of system states $\PS$, denote by $\PS'$ the set of system states that can be obtained from a $\QPS(\rho_{XDE})\in\PS$
by applying a $X$-controlled operation on $E$.

\begin{proposition} Let $\PS$ be a set of physical systems and $\PExt$ a PRE.
Then $\PExt$ has the same performance parameters on $\PS$ and $\PS'$.
\end{proposition}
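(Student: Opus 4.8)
The plan is to show that passing from $\PS$ to $\PS'$ costs nothing, because applying an $X$-controlled operation on $E$ commutes with everything $\PExt$ does. The key observation is that the extraction operation $\Phi_{\PExt}$ is by definition an $X$-controlled admissible operation acting only on $\vecPhy$ (producing outputs to $\DecisionSystem\ot\Zsystem\ot\vecPhy$), whereas the perturbation is an $X$-controlled operation acting only on $E$. Two $X$-controlled operations on disjoint registers commute: for any $\QPS(\rho_{XDE})\in\PS$ and any $X$-controlled $\Lambda$ on $E$, we have $\Phi_{\PExt}\circ(\I_{\DecisionSystem\Zsystem}\ot\I_{\vecPhy}\ot\Lambda) = (\I_{\DecisionSystem\Zsystem\vecPhy}\ot\Lambda)\circ\Phi_{\PExt}$ as super-operators, since each conditions on the same classical value $x$ of $X$ (which is untouched and broadcast) and then acts on the disjoint registers $\vecPhy$ and $E$ respectively.

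\textbf{First I would} handle soundness. Fix $\QPS(\rho')\in\PS'$, so $\rho' = \Lambda(\rho)$ for some $X$-controlled $\Lambda$ on $E$ and some $\QPS(\rho)\in\PS$ with the \emph{same} implementation $\Pi$ (note $\Lambda$ touches only $E$, so device states and device operations are unchanged, hence $\QPS(\rho')$ and $\QPS(\rho)$ have the same implementation, and in particular the same distance to $\Pi^*$). By soundness on $\PS$, $\Phi_{\PExt}(\rho)$ has soundness error $\le\eps_s$, i.e. there is an ideal $\alpha$ with $\trnorm{\Phi_{\PExt}(\rho)^{\AcceptEvent}-\alpha}\le\eps_s$. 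Now apply $\Lambda$ (extended by identity on $\DecisionSystem\Zsystem\vecPhy$) to both sides. By the commutation identity, $\Lambda\big(\Phi_{\PExt}(\rho)\big) = \Phi_{\PExt}(\Lambda(\rho)) = \Phi_{\PExt}(\rho')$, and since $\Lambda$ does not touch $\DecisionSystem$ it commutes with the projection onto $A=1$, so $\Lambda\big(\Phi_{\PExt}(\rho)^{\AcceptEvent}\big) = \Phi_{\PExt}(\rho')^{\AcceptEvent}$. Let $\alpha' \defeq \Lambda(\alpha)$. Then $\alpha'$ is still ideal: $A=1$ is preserved, and $\Zsystem$ remains uniform to $XE$ because $\Lambda$ acts on $E$ controlled by $X$, which cannot create correlation of a register ($\Zsystem$) that was independent of $XE$ with the pair $XE$ — formally, if $\alpha_{\Zsystem X E} = \uniform{Z}\ot\alpha_{XE}$ then $\alpha'_{\Zsystem X E} = \uniform{Z}\ot\Lambda(\alpha_{XE})$ is of the same product form. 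By monotonicity of trace distance (Fact~\ref{prelim:fact:monotone_trace}) under the admissible super-operator $\Lambda$, $\trnorm{\Phi_{\PExt}(\rho')^{\AcceptEvent}-\alpha'}\le\trnorm{\Phi_{\PExt}(\rho)^{\AcceptEvent}-\alpha}\le\eps_s$. So $\Phi_{\PExt}(\rho')$ has soundness error $\le\eps_s$.

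\textbf{Then I would} handle completeness. Use the \emph{same} ideal implementation $\Pi^*$. Given $\QPS(\rho')\in\PS'$ with $\delta(\Pi(\QPS'),\Pi^*)\le\eta$, write $\rho'=\Lambda(\rho)$ with $\QPS(\rho)\in\PS$; as noted, $\Pi(\QPS')=\Pi(\QPS)$, so $\delta(\Pi(\QPS),\Pi^*)\le\eta$ and completeness on $\PS$ gives $\Prob[\AcceptEvent(\PExt,\QPS)]\ge 1-\eps_c$. But the acceptance probability is $\tr\big(\Phi_{\PExt}(\rho)^{\AcceptEvent}\big) = \tr\big(\Lambda(\Phi_{\PExt}(\rho)^{\AcceptEvent})\big) = \tr\big(\Phi_{\PExt}(\rho')^{\AcceptEvent}\big) = \Prob[\AcceptEvent(\PExt,\QPS')]$, using that $\Lambda$ is trace-preserving and the commutation identity again. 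Hence $\Prob[\AcceptEvent(\PExt,\QPS')]\ge 1-\eps_c$, completing the proof.

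\textbf{The main obstacle} I anticipate is purely a matter of careful bookkeeping rather than depth: making the commutation identity $\Phi_{\PExt}\circ\Lambda = \Lambda\circ\Phi_{\PExt}$ precise when $\Phi_{\PExt}$ both reads $X$ and appends new registers $\DecisionSystem\ot\Zsystem$ (one must check that "$X$-controlled on $E$" lifts along the register-adjunction without interacting with the newly created outputs), and verifying that the "ideal" property is genuinely preserved under $\Lambda$ — i.e. that an $X$-controlled channel on $E$ cannot spoil "$\Zsystem$ uniform to $XE$." Both reduce to the elementary fact that operations on disjoint subsystems commute and that a channel on $E$ controlled by classical $X$ maps product states $\uniform{Z}\ot(\cdot)_{XE}$ to product states of the same form; I would state these as one or two lines each rather than belabor them.
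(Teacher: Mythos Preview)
Your proposal is correct and follows essentially the same approach as the paper: both arguments hinge on the commutation of the $X$-controlled extraction operation $\Phi_{\PExt}$ (acting on $\vecPhy$) with the $X$-controlled perturbation on $E$, together with the observations that the implementation is unchanged by the perturbation and that the ideal property survives applying the perturbation. The only cosmetic difference is that the paper argues equality of acceptance probabilities via $\rho'_{XD}=\rho_{XD}$, whereas you derive it from trace preservation of $\Lambda$; these are equivalent one-line observations.
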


\begin{proof}Since $\PS\subseteq\PS'$, we need only to show that the performance parameters of $\PS'$ are
no worse than those of $\PS$.  Fix a $\QPS'(\rho'_{XDE'})\in\PS'$.
Let ${M}$ be an $X$-controlled $E$-to-$E'$ operation and $\QPS(\rho_{XDE})\in\PS$ be such that $\rho'={M}(\rho)$.
 
{\em Completeness and robustness.} Use the same ideal implementation $\Pi$ for $\PS$ as the ideal implement for $\PS'$.
Assume that $\delta(\Pi(\QPS'), \Pi)\le \eta$.
Since $\Pi(\QPS)=\Pi(\QPS')$, we have $\delta(\Pi(\QPS), \Pi) \le \eta$. Thus
$\AcceptEvent(\QPS)\ge 1-\epsilon_c$. Since the acceptance probability depends only on the reduced density operator on $XD$,
and $\rho'_{XD} = \rho_{XD}$. We have ${\AcceptEvent}(\QPS')\ge 1-\epsilon_c$. This proves the claimed result on completeness
and robustness.

{\em Soundness.} 
Note that $\Phi_{\PExt}$ commute with ${M}$ as both are $X$-controlled operations acting on two
disjoint quantum subsystems. That is,
\begin{equation} \Phi_{\PExt}(\rho') = {M} (\Phi_{\PExt}(\rho)).\end{equation}
Furthermore,
\begin{equation} \Phi_{\PExt}(\rho')^\AcceptEvent = {M} \left(\Phi_{\PExt}(\rho)^\AcceptEvent\right).\end{equation}

By the soundness of $\PExt$ on $\PS$, for an ideal post-extraction state $\delta$,
\begin{equation}\left \| \Phi_{\PExt}(\rho)^{\AcceptEvent} - \delta\right\|_{\tr} \le \epsilon_s.\end{equation}
Thus
\begin{equation}
\left \| \Phi_{\PExt}(\rho')^\AcceptEvent - M(\delta)\right\|_{\tr} =
\left \| {M}\left(\Phi_{\PExt}(\rho)^{\AcceptEvent}\right) - M(\delta)\right\|_{\tr} 
\le \epsilon_s.\end{equation}
Since $M(\delta)$ also has $A=1$ and $Y$ uniform to $XE$,  $\Phi_{\PExt}(\rho')$ has also an $\epsilon_s$ soundness error. This proves the soundness claim.
\end{proof}

The Equivalence Lemma follows from the above together with the following proposition.

\begin{proposition} Let $D$ be a set of devices. If $\PS_D$ is the set of all global-uniform physical systems
over $D$, $\PS'_D$ is the set of all device-uniform physical systems over $D$.
\end{proposition}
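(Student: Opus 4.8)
The plan is to show that $\PS'_D$ equals the collection of all device-uniform physical systems over $D$ by proving the two inclusions separately; the inclusion of $\PS'_D$ in that collection is immediate from the syntax, and the reverse inclusion carries the only real content.

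\textbf{Easy inclusion.} Let $\QPS'(\rho'_{XDE'})\in\PS'_D$, say obtained from a global-uniform $\QPS(\rho_{XDE})\in\PS_D$ by applying an $X$-controlled operation $M$ from $E$ to $E'$, with the device operations inherited unchanged. Such an $M$ reads the classical register $X$, consumes $E$, and produces $E'$, so it acts as the identity on $X$ and on $D$; hence $\rho'_{XD}=\rho_{XD}$. Global uniformity of $\rho_{XDE}$ gives $\rho_{XD}=\uniform{X}\otimes\rho_D$, whence $\rho'_{XD}=\uniform{X}\otimes\rho'_D$, i.e.\ $X$ is uniform to $D$ in $\rho'$. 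As the device operations are unchanged, $\QPS'$ is a device-uniform physical system over $D$.

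\textbf{Hard inclusion.} Fix an arbitrary device-uniform $\QPS'(\sigma_{XDE'})$ over $D$; the task is to exhibit a global-uniform $\QPS(\rho_{XDE})\in\PS_D$ with the same device operations together with an $X$-controlled operation $M$ from $E$ to $E'$ satisfying $M(\rho_{XDE})=\sigma_{XDE'}$. Write $\sigma_{XDE'}=\tfrac1{|X|}\sum_x\ketbra{x}_X\otimes\sigma^x_{DE'}$; device-uniformity is precisely the statement that the $X$-marginal is uniform and that $\tr_{E'}\sigma^x_{DE'}=\sigma_D$ is one and the same state for every $x$. Take $\hat D\cong D$, fix a purification $\ket{\sigma}_{D\hat D}$ of $\sigma_D$, and set $E:=\hat D$, $\rho_{DE}:=\ketbra{\sigma}_{D\hat D}$, $\rho_{XDE}:=\uniform{X}\otimes\rho_{DE}$; equipped with the device operations of $\QPS'$, this is a global-uniform physical system over $D$, hence lies in $\PS_D$. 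For each $x$, the state $\sigma^x_{DE'}$ is an extension of $\sigma_D$, so by the standard fact that every extension of a state is obtained by applying some admissible operation to the purifying register of a fixed purification (a consequence of the unitary equivalence of purifications; cf.\ Lemma~\ref{prelim:lem:fidelity}) there is an admissible operation $M_x$ from $\hat D$ to $E'$ with $(\I_D\otimes M_x)(\ketbra{\sigma}_{D\hat D})=\sigma^x_{DE'}$. Let $M:=\sum_x\ketbra{x}_X\otimes M_x$ be the resulting $X$-controlled operation from $E=\hat D$ to $E'$. Then $M(\rho_{XDE})=\tfrac1{|X|}\sum_x\ketbra{x}_X\otimes\sigma^x_{DE'}=\sigma_{XDE'}$, so $\QPS'\in\PS'_D$; combined with the easy inclusion, this gives the proposition.

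\textbf{Main obstacle.} Everything apart from the hard inclusion is bookkeeping — that $M$ leaves the $X$- and $D$-marginals untouched, and that the $M$ assembled from the channels $M_x$ is a bona fide $X$-controlled operation. The substantive point is the construction of the global-uniform pre-image: a single purification $\ket{\sigma}_{D\hat D}$ of the common marginal $\sigma_D$ must be simultaneously convertible, by channels acting only on its purifying register, into every one of the (finitely many) extensions $\sigma^x_{DE'}$. This is exactly where it is used that global uniformity forces $X$ to be uncorrelated with $DE$ jointly whereas device uniformity constrains only the $D$-marginal, and it is where one should check that $\hat D\cong D$ is already a large enough purifying space (it is, since the Schmidt rank of each $\sigma^x_{DE'}$ across the cut $D\mid E'$ equals $\rank(\sigma_D)\le\dim D$).
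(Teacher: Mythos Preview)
Your proof is correct and follows essentially the same approach as the paper: both establish the hard inclusion by purifying with respect to the common $D$-marginal and invoking Uhlmann's theorem to manufacture the required $X$-controlled operation on the adversary register. The only cosmetic difference is that the paper purifies each $\rho'^x_{DE}$ on an auxiliary space $E'$ of dimension $\dim(DE)$ and uses explicit unitaries on $EE'$ followed by a partial trace, whereas you purify $\sigma_D$ on the minimal space $\hat D\cong D$ and package the same construction as the standard ``every extension arises from a channel on the purifying register'' fact.
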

\begin{proof} We shall omit the subscript $D$ in this proof.
Clearly all states in $\PS'$ are device-uniform. 
We need only to show that for an arbitrary device uniform state
$\rho'=\rho'_{XDE}$, there exists a global-uniform $\rho=\rho_{XDE'}$ and an $X$-controlled operation $M$ on $E'$ such
that $\rho'={M}(\rho)$.

We write $\rho'= \sum_{1\le x\le\dim(X)} |x\rangle\langle x|\otimes\rho'^x_{DE}$.
Let $E'$ be a system of the same dimension as $DE$, and for each $x$, let $|\phi^x_{DEE'}\rangle$
be a purification of $\rho'^x_{DE}$. That is, with $\phi^x= |\phi^x\rangle\langle\phi^x|$,
$\phi^x_{DE}=\rho'^x_{DE}$. By the assumption that $\rho'_{XD}= U_X\otimes\rho'_D$, 
$\phi^x_D$ are  identical for all $x$. Set $|\phi\rangle=|\phi^{0^n}\rangle$. Thus by Uhlmann's Theorem (c.f. Chapter 9 of~\cite{Nielsen:2000:book}),
for each $x$, there exists a unitary operators $U_x$, such that
\begin{equation} |\phi^x\rangle = U^x|\phi\rangle.\end{equation}
Define ${M}$ to be the $X$-controlled $E$-operation that is the composition of applying $U^x$, as controlled by $X$, then tracing out  $E'$.
Then with $\rho=\sum_{x} |x\rangle\langle x|\otimes \phi\in\PS$,
\begin{equation} \rho' = {M} (\rho).\end{equation}
Therefore, $\rho'\in\PS'$.
\end{proof}

\paragraph{Quantum-proof strong randomness extractors and Somewhere Random Source} \label{sec:somewhere}
We will first review quantum-proof
randomness extractors, which turn a min-entropy source to a quantum-secure output, with the help of a short seed.
Then we will introduce the somewhere random sources in our protocol construction.

\begin{definition}[Quantum-proof Strong Randomness Extractor] \label{SW:def:q_strong_extractor}
  \label{def:extractor_q_proof}
  A function $\Ext: \{0,1\}^n \times \{0,1\}^d \to \{0,1\}^m$ is a
  \emph{quantum-proof} (or simply \emph{quantum})
  \emph{$(k,\eps)$-strong randomness extractor}, if for all cq
  states $\rho_{XE}$ with $\Hmin(X|E) \geq k$,
  and for a uniform seed $Y$ independent of $\rho_{XE}$, we
  have
\begin{equation} \label{eqn:extractor_q_proof}  \trnorm{
    \rho_{\Ext(X,Y)YE} - \uniform{m} \ot \rho_Y \ot \rho_E}
  \leq \eps. \end{equation}
 \end{definition}
 
It is known that Trevisan's extractors~\cite{Trevisan:extractor} are secure against quantum adversaries~\cite{DPVR12}.
Those will be used for instantiating our main theorem.
An apparent problem when one tries to apply those extractors in our setting is that we do not have
the required uniform seed. Our solution is to enumerate all the possible seed values and run the extractors
on the fixed seed values. The output property of the extractor now translates to a guarantee that the output of
at least one instance (in fact, a large fraction of them) of the fixed-seeded extractors is close to uniform. The output together forms what we call
{\em quantum somewhere randomness}.
 In classical setting, a somewhere random source $\vecS$ is simply a sequence of random variables $\vecS = (S_1,\dots,S_r)$  such that the marginal distribution of some block $S_i$ is uniform (but there can be arbitrary correlation among them). Somewhere random sources are useful intermediate objects for several constructions of randomness extractors (see, e.g., \cite{Rao07,Li13}), but to the best of our knowledge, its quantum analogue has not been considered before. 
 
\begin{definition}[Quantum-SR Source] \label{SW:def:SR_q_source} A cq-state $\rho \in \density{S_1\ot \cdots \ot S_r \ot E}$ with  classical  $S_1$, $S_2$, $\cdots$, $S_r \in \01^m$ and quantum  $E$ is a \emph{$(r,m)$-quantum somewhere random (SR) source} against $E$ if
there exists $i \in [t]$ such that 
\begin{equation}
  \rho_{S_iE} = \uniform{m} \ot \rho_E.
\end{equation}
We say that $\rho$ is a \emph{$(r,m,\eps)$-quantum somewhere random source} if there exists $i \in [t]$ such that
  \begin{equation}
      \trnorm{\rho_{S_iE}-\uniform{m}\ot \rho_E}\leq \eps.
  \end{equation}
%
\end{definition}

We remark that the fact that $\rho$ is a $(r,m,\eps)$-quantum somewhere random source does not necessarily imply that $\rho$ is $\eps$-close in trace distance to some $(r,m)$-quantum somewhere random source $\rho'$. 
In contrast, the analogous statement is true for classical somewhere random source. However, by Lemma~\ref{prelim:lem:fidelity}, one can show that they are $2\sqrt{\eps}$ close.  On the other hand, 
just like its classical counterpart, one can convert a weak source $X$ to a somewhere random source by applying a (quantum-proof) strong randomness extractor to $X$ with all possible seeds (Each seed yields one block). 

\begin{proposition} \label{prop:SR_generation}
Let $\Ext: \{0,1\}^n \times \{0,1\}^d \to \{0,1\}^m$ be a
  \emph{quantum-proof} \emph{$(k,\eps)$-strong extractor}. Let $\rho_{XE}$ be a cq-state with $\Hmin(X|E) \geq k$.
  For every $i\in \zo^n$, let $S_i = \Ext(X,i)$. Then the cq-state
  \begin{equation}
    \rho_{S_1 \dots S_{2^d} E} \defeq \sum_x p_x \ketbra{S_1} \ot \cdots \ot \ketbra{S_{2^d}} \ot \rho_E^x,
  \end{equation}
  is a $(2^d,m, \eps)$-quantum SR source. 
  Moreover, the expectation of $\trnorm{\rho_{S_iE}-\uniform{m}\ot \rho_E}$ over a uniform random index $i\in \zo^n$ is at most $\eps$. 
\end{proposition}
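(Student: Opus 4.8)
The plan is to derive both claims directly from the defining inequality of a quantum-proof strong extractor (Definition~\ref{SW:def:q_strong_extractor}); the only real work is to rewrite its left-hand side as an average over seeds, so I would begin by recording the elementary fact that trace distance splits blockwise over a classical register. Concretely, for states that are classical on a register $Y=\zo^{d}$, say $\sigma_{YE}=\sum_i 2^{-d}\ketbra{i}\ot\sigma^i_E$ and $\tau_{YE}=\sum_i 2^{-d}\ketbra{i}\ot\tau^i_E$, one has $\trnorm{\sigma_{YE}-\tau_{YE}}=\sum_i 2^{-d}\trnorm{\sigma^i_E-\tau^i_E}$, which is just additivity of the trace norm over an orthogonal direct sum.

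Next I would apply this with $Y$ the uniform, independent seed register of $\Ext$. Since each $S_i=\Ext(X,i)$ is a deterministic function of $X$, we have $\rho_{\Ext(X,Y)YE}=\sum_i 2^{-d}\ketbra{i}_Y\ot\rho_{S_iE}$ and $\uniform{m}\ot\rho_Y\ot\rho_E=\sum_i 2^{-d}\ketbra{i}_Y\ot(\uniform{m}\ot\rho_E)$; moreover the $E$-marginal here is exactly $\rho_E=\tr_X\rho_{XE}$, which is also the $E$-marginal of $\rho_{S_1\cdots S_{2^d}E}$ (again because every $S_i$ is a function of $X$ alone). Combining the blockwise split with Definition~\ref{SW:def:q_strong_extractor}, the extractor guarantee becomes precisely
\[ \frac{1}{2^{d}}\sum_{i\in\zo^{d}}\trnorm{\rho_{S_iE}-\uniform{m}\ot\rho_E}\;\le\;\eps, \]
which is exactly the ``moreover'' statement. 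The first claim then follows because the minimum of a finite collection of nonnegative numbers is at most its average: there is some $i^{\star}$ with $\trnorm{\rho_{S_{i^{\star}}E}-\uniform{m}\ot\rho_E}\le\eps$, and this is the definition of a $(2^{d},m,\eps)$-quantum SR source.

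There is no substantial obstacle here; the argument is essentially bookkeeping. The one point to state carefully is the blockwise decomposition of trace distance when conditioning on the classical seed register --- i.e., that spreading the extractor's single averaged guarantee back over all hard-wired seed values is lossless --- together with the check that the ``$\rho_E$'' appearing in the extractor definition and in the SR-source definition denote the same operator. If one wants the stronger ``large fraction of good blocks'' phrasing used later in the paper, a Markov bound applied to the displayed average shows that at least a $(1-\sqrt{\eps})$ fraction of the blocks are $\sqrt{\eps}$-close to $\uniform{m}\ot\rho_E$ in trace distance.
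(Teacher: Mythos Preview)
Your proposal is correct and follows essentially the same route as the paper: both arguments unpack the strong-extractor guarantee as $\sum_i 2^{-d}\trnorm{\rho_{S_iE}-\uniform{m}\ot\rho_E}\le\eps$ and then observe that some term must lie below the average. Your version is in fact slightly more explicit than the paper's (you spell out the blockwise trace-norm decomposition and verify that the two occurrences of $\rho_E$ coincide), but there is no substantive difference in approach.
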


\begin{proofof}[Proposition~\ref{prop:SR_generation}]
Since $\Ext$ is a quantum-proof $(k,\eps)$-strong extractor and $\Hmin(X|E) \geq k$, we have that
\begin{equation}
  \trnorm{\rho_{\Ext(X,Y)YE} - \uniform{m} \ot \rho_Y \ot \rho_E} \leq \eps,
\end{equation}
which is equivalent to
\begin{equation}
  \sum_{i=1}^{2^d} \frac{1}{2^d} \trnorm{\rho_{\Ext(X,i)E}-\uniform{m} \ot \rho_E} \leq \eps.
\end{equation}
Thus immediately we have that there exists an index $i \in [2^d]$ such that
\begin{equation}
  \trnorm{\rho_{\Ext(X,s_i)E}-\uniform{m} \ot \rho_E} \leq \eps,
\end{equation}
or equivalently $\trnorm{\rho_{S_iE}-\uniform{m} \ot \rho_E}\leq \eps$.
%
\end{proofof}

\paragraph{Construction of PREs for any min-entropy source} \label{sec:protocol}
We are now able to state precisely our main theorem. The Master Protocol is described in details in Fig.~\ref{fig:prot_A}.

\begin{figure}
\begin{protocol*}{Physical Randomness Extractor~$\PExt$}
\begin{enumerate}[itemsep=-3pt]
\item Let $\Ext: \zo^n \times \zo^d \rightarrow \zo^m$ be a quantum-proof strong randomness extractor.
\item Let $\PExt_{\mathrm{seed}}$ be a seeded PRE with seed length $m$ that uses $\tcert$ devices. Let $0<\eta<1$. 
\item $\Piamp$ operates on an input source $X$ over $\zo^n$ and $\tamp = 2^d \cdot \tcert$ devices $\vecPhy = (\vecD_1,\dots,\vecD_{2^d})$, where each $\vecD_i$ denotes a set of $\tcert$ devices, as follows.
\end{enumerate}
\begin{step}
\item For every $i \in \zo^d$, let $S_i = \Ext(X,i)$ and invoke $({\AcceptEvent} _i, Z_i) \leftarrow \Picert(S_i, \vecD_i)$.
\item If there exist $\eta$ fraction of ${\AcceptEvent} _i = 0$, then $\Piamp$ outputs $A=0$; otherwise, $\Piamp$ outputs $(A,Z) = (1, \bigoplus_{i\in [2^d]} Z_i)$.
\end{step}
\end{protocol*}
\caption{\small Our Main Construction of Physical Randomness Extractor $\Piamp$.}
\label{fig:prot_A}
\end{figure}

\begin{theorem}[Main Theorem] \label{thm:main}  Let $\Ext: \zo^n \times \zo^d \rightarrow \zo^m$ be a quantum $(k,\eps_{\Ext})$-strong randomness extractor, $\Picert$ be a 
$(m, \tcert, l)$-seeded-PRE with a uniform-to-all seed of length $m$, a completeness error $\eps_c$ tolerating an $\eta$ level of noise, 
and a soundness error $\eps_s$. Let $0<\eta<1$ be the rejection threshold.
Then $\Piamp$ (as shown in Fig.~\ref{fig:prot_A}) is a $(n, k, 2^d \tcert, l)$-PRE for random-to-device sources with a completeness error $(\eps_c+\eps_{\Ext})/\eta$ and 
a soundness error $\eps_s+2\sqrt{\eps_{\Ext}}+\eta$ .
\end{theorem}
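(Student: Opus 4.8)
The plan is to analyze the Master Protocol $\Piamp$ by combining three ingredients already assembled in the excerpt: (i) Proposition~\ref{prop:SR_generation}, which says that feeding the $(n,k)$-source $X$ into the quantum-proof extractor $\Ext$ with all $2^d$ seed values produces a $(2^d, m, \eps_{\Ext})$-quantum SR source whose blocks $S_i$ are uniform-to-$E$ in expectation over $i$; (ii) the Equivalence Lemma (Lemma~\ref{lem:EL}), which upgrades $\Picert$ from a seeded PRE for uniform-to-all seeds to one for uniform-to-devices seeds with the same performance parameters; and (iii) the definitions of soundness/completeness of a PRE (Definition~\ref{def:PRE_para}), engineered precisely so that triangle inequalities survive perturbations of the input state. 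I would first set up notation: let $\rho_{X\vecPhy E}$ be the input state of an $(n,k,2^d\tcert,l)$-physical system with a random-to-device source, let $S_i = \Ext(X,i)$, and let $\eps_i = \trnorm{\rho_{S_i E} - \uniform{m}\ot\rho_E}$, so that $\E_i[\eps_i]\le \eps_{\Ext}$ by Proposition~\ref{prop:SR_generation}. Here $E$ should be taken to include all devices $\vecD_j$ for $j\ne i$ (since min-entropy of $X$ is to the devices, $S_i$ is uniform-to-(devices), hence uniform-to-$\vecD_j$ for $j\ne i$ and to the adversary), which is exactly the uniform-to-devices setup the Equivalence Lemma handles.

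For \textbf{completeness}: fix the ideal implementation of each $\vecD_i$ to be the ideal implementation of $\Picert$. Suppose the overall implementation is within noise $\eta$ of the ideal; by the Reasonable Property of the noise model, each restriction $\Pi_{\vecD_i}$ is within $\eta$ of the ideal for $\Picert$. A single block $\Picert(S_i,\vecD_i)$ would reject with probability $\le \eps_c$ if $S_i$ were genuinely uniform-to-$\vecD_i$; since $S_i$ is only $\eps_i$-close, rejection probability is $\le \eps_c + \eps_i$ by monotonicity of trace distance (Fact~\ref{prelim:fact:monotone_trace}). Hence the expected number of rejecting blocks is $\le \sum_i(\eps_c+\eps_i) \le 2^d(\eps_c+\eps_{\Ext})$; by Markov's inequality the probability that an $\eta$ fraction (i.e., $\ge \eta 2^d$ blocks) reject is $\le (\eps_c+\eps_{\Ext})/\eta$, which gives the claimed completeness error.

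For \textbf{soundness}: this is the main obstacle and the heart of the argument. Since the blocks $S_i$ are correlated, I cannot treat them independently; instead I argue via a single good block. Let $G = \{i : \eps_i \le \sqrt{\eps_{\Ext}}\}$; by Markov, $|G| \ge (1-\sqrt{\eps_{\Ext}})2^d$. On any accepting run, fewer than $\eta 2^d$ blocks reject, so (for $\sqrt{\eps_{\Ext}} + \eta < 1$) at least one block $i^\star\in G$ is accepted — actually I want to be careful and fix a single $i^\star\in G$ in advance and condition on the event that $\Picert(S_{i^\star},\vecD_{i^\star})$ accepts, absorbing the case where $i^\star$ rejects into the error budget. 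For that fixed $i^\star$: $S_{i^\star}$ is $\sqrt{\eps_{\Ext}}$-close to a state where it is uniform-to-devices; replace the input state by that idealized state, paying $2\sqrt{\eps_{\Ext}}$ in trace distance after the (trace-distance-nonincreasing) extraction map — wait, paying $\sqrt{\eps_{\Ext}}$ for the substitution, and I will track the factor carefully; the stated bound $2\sqrt{\eps_{\Ext}}$ suggests one also passes through a fidelity/Fuchs–van de Graaf step (Lemma~\ref{prelim:lem:van_de_Graaf}, Lemma~\ref{prelim:lem:fidelity}) to relate "$\eps$-quantum-SR" to "$\eps'$-close to a genuine quantum-SR source", which as the remark after Definition~\ref{SW:def:SR_q_source} notes costs a square root. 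On the idealized input, the Equivalence Lemma applies to $\Picert$ acting on $(S_{i^\star},\vecD_{i^\star})$ with all other registers (including $X$, all other devices, and $E$) playing the role of the side information, yielding a post-extraction state with soundness error $\le \eps_s$: i.e., $Z_{i^\star}$ is $\eps_s$-close to uniform with respect to everything else. Then I use the key structural fact that XOR-ing $Z_{i^\star}$ with the other $Z_j$'s — which are functions of that side information — preserves uniformity of the XOR (uniform XOR anything independent is uniform, and trace distance is nonincreasing under the XOR map), so $Z = \bigoplus_j Z_j$ is $\eps_s$-close to ideal on the idealized input. Undoing the input substitution adds $2\sqrt{\eps_{\Ext}}$, and accounting for the possibility that the designated block $i^\star$ was among the rejecting ones (which forces $A=0$ only when $\ge\eta$ fraction reject, but to make $i^\star$ specifically reliable I pay up to $\eta$) gives total soundness error $\eps_s + 2\sqrt{\eps_{\Ext}} + \eta$. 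I expect the delicate points to be: (a) correctly identifying which subsystem is "$E$" for the Equivalence Lemma so that "uniform-to-devices" genuinely holds, and (b) the bookkeeping that shows conditioning on the single-block accept event, together with the $\eta$-threshold rejection rule, only costs an additive $\eta$ rather than something worse — this is where the specific form of the definition of soundness of a post-extraction state (a single trace-distance inequality on the subnormalized accepting projection, with a possibly-different adversary system) does the real work.
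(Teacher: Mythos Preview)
Your completeness argument is essentially the paper's: bound each block's rejection probability by $\eps_c+\eps_i$, average, then apply Markov against the threshold $\eta$.

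Your soundness argument, however, has a genuine gap at the step where you fix a single $i^\star\in G$ in advance and then claim ``to make $i^\star$ specifically reliable I pay up to $\eta$.'' That bound is false for a \emph{fixed} index: an adversarial implementation can single out the devices $\vecD_{i^\star}$ and make them always reject while all other blocks always accept. Then $A$ occurs (only one rejecting block, well below the $\eta$ threshold) yet $A_{i^\star}$ never does, so $\Pr[A\wedge\bar A_{i^\star}]$ can be as large as $\Pr[A]$, not $\eta$. The $\eta$ bound holds only \emph{on average over $i$}: $E_i[\Pr[A\wedge\bar A_i]]\le\eta$, because on any accepting run fewer than an $\eta$ fraction of the $A_i$ fail. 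A second, related problem: your Markov cut $\eps_{i^\star}\le\sqrt{\eps_{\Ext}}$ followed by the fidelity/Fuchs--van de Graaf lifting (Lemmas~\ref{prelim:lem:van_de_Graaf} and~\ref{prelim:lem:fidelity}) yields a full-state perturbation of $2\sqrt{\eps_{i^\star}}\le 2\eps_{\Ext}^{1/4}$, not $2\sqrt{\eps_{\Ext}}$, so even ignoring the first issue you would not recover the stated bound.

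The paper fixes both issues simultaneously by \emph{averaging over $i$} rather than fixing one index. Writing $w_i=\trnorm{\rho_{S_iDE}-\uniform{S_i}\ot\rho_{DE}}$ and lifting (via Lemmas~\ref{prelim:lem:van_de_Graaf},~\ref{prelim:lem:fidelity}) to a full state $\rho'^{(i)}$ with $\trnorm{\rho-\rho'^{(i)}}\le 2\sqrt{w_i}$, one obtains for each $i$ an ideal post-extraction state $\gamma_i$ (with $Z_i$ uniform to $XE$) satisfying $\trnorm{\gamma^{A\wedge A_i}-\gamma_i^{A}}\le 2\sqrt{w_i}+\eps_s$. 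The comparison ideal state for $\Piamp$ is then the \emph{mixture} $\gamma':=E_i[\gamma_i^A]$, and one splits $\gamma^A=\gamma^{A\wedge A_i}+\gamma^{A\wedge\bar A_i}$ inside the average. The first piece contributes $E_i[2\sqrt{w_i}]+\eps_s\le 2\sqrt{\eps_{\Ext}}+\eps_s$ by Jensen, and the second contributes $E_i[\Pr[A\wedge\bar A_i]]\le\eta$ by the threshold rule. Your identification of the Equivalence Lemma as the tool that justifies applying $\Picert$'s soundness to each $\rho'^{(i)}$ (where $S_i$ is only uniform-to-$D_i$, not uniform-to-all) is correct and is exactly how the paper proceeds; the missing idea is that the $\eta$ and the $\sqrt{\eps_{\Ext}}$ must both come from averages, not from a single designated block.
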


\begin{proof}
We first introduce some notations. We will use $i$, $i=1,\cdots, 2^d$ as a subscript to index an instance of 
$\Ext$ or $\Picert$ corresponding to the seed $i$ for $\Ext$.
Fix a physical system $\QPS(\rho_{XDE})$ with $\Hmin({X|D})\ge k$. We continue to write $\rho_{XSDE}$ to denote the
state after applying the $\Ext$ instances. We write ${\RejectEvent} $ for  $1-\AcceptEvent$ and similarly for ${\RejectEvent} _i$ for each $i$.
For each $i$, define
\begin{equation} w_i=\trnorm{\rho_{S_iDE}-\uniform{S_i}\ot \rho_{DE}}.\end{equation}
By the strong extracting property of $\Ext$, for a uniformly chosen $i$,
\[ E_i [ w_i ] \le \eps_{\Ext}.\]
Also, by Propositions~\ref{prelim:lem:fidelity} and \ref{prelim:lem:van_de_Graaf}, there exists a state $\rho'^{(i)}_{XSDE}$ with $\rho'_{S_iDE} = \uniform{S_i}\ot\rho_{DE}$ and
\[ \trnorm{\rho-\rho'} \le  2\sqrt{w_i}.\]
Fix such a $\rho'^{(i)}$ for each $i$.

{\em Completeness.} For each $i$, fix an ideal implementation $\Pi_{D_i}^*$ for $\Picert_i$. We will use their tensor product $\Pi^*$ as the ideal
implementation for $\Piamp$. 
Now suppose that $\delta(\Pi(\QPS), \Pi^*)\le\eta$.
By the reasonable property of the noise model, for each $i$,
$\delta(\Pi_{D_i}(\QPS), \Pi^*_i) \le \eta$. Let $\tilde\rho=\tilde\rho_{XSDE}$ be an arbitrary state with $\tilde\rho_{S_iDE}=U_{S_i}\otimes\rho_{DE}$.
Then $\tilde\rho$ is uniform-to-$D_i$ and 
\[ \delta(\Pi_{D_i}(\tilde\rho), \Pi_{D_i}^*)
= \delta(\Pi_{D_i}(\rho), \Pi_{D_i}^*)\le \eta,\]
by the completeness of $\Picert$,
\[ {\AcceptEvent} _i(\tilde\rho) \ge 1-\eps_c.\]
Since $\Phi_{\PExt_i}$ acts on $S_iD_i$ only,
\[ {\RejectEvent} _i(\rho)\le {\RejectEvent}_i(\tilde\rho) + w_i \le  \eps_c + w_i.\]
Thus
\[ \Prob[\RejectEvent] = \Prob[\sum_i {\RejectEvent} _i \ge \eta 2^d] \le (\eps_c+E_i[{w_i}])/\eta \le (\eps_c+{\epsilon_{\Ext}})/\eta.\]

{\em Soundness.} 
The proof is based on the following two observations. The first is that if some $Z_i$ is uniform (to $XE$), then so is $Z$.
It follows that if for each $i$, $Z_i$ is uniform in a (subnormalized) $\gamma_i$, then the $Z$ obtained from $\sum_i\gamma_i$ is also uniform.
Note that not all $\Picert_i$ accepts when $\Piamp$ accepts, thus the observation cannot be directly applied.
This problem is resolved by an additional insight that  for a randomly chosen $i$, the chance of $\Piamp$ accepts but $\Picert_i$ rejects is small.
The details follow.

Denote by  $\vec A=[{\AcceptEvent} _i]_i$ and $\vec Z=[Z_i]_i$. Let $\gamma=\gamma_{X{\vec A}{\vec Z}DE}$ be the state after applying all instances of $\Picert$. Let $\gamma'^{(i)}$ be the same except that $\rho$ is replaced by $\rho'^{(i)}$.
By the soundness of $\Piamp$, there exists an ideal post-extraction (with respect to $\PExt_i$) state $\gamma_i$ such that
\[ \trnorm{\gamma'^{(i) \AcceptEvent_i} - \gamma_i} \le \eps_c.\]
Thus
\[ \trnorm{\gamma^{\AcceptEvent_i} - \gamma_i} \le \trnorm{\gamma^{\AcceptEvent_i} - \gamma'^{(i)\AcceptEvent_i}} +
\trnorm{\gamma'^{(i)\AcceptEvent_i} - \gamma_i}  \le 2\sqrt{w_i} +\eps_c.\]
Applying the final acceptance projection (that is, accept when $<\eta 2^d$ of $A_i$'s reject), we have
\begin{equation}\label{eqn:accept}
\trnorm{\gamma^{\AcceptEvent\wedge\AcceptEvent_i} - \gamma_i^{\AcceptEvent}} \le 2\sqrt{w_i} + \eps_c.
\end{equation}
Note that $\gamma_i^{\AcceptEvent}$ still has $Z_i$ uniform to  $XE$.
Thus with  
\begin{equation}
\gamma' = E_i[\gamma_i^{\AcceptEvent}],
\end{equation}
and $Z(\cdot)$ the super-operator for outputting $Z$,
we have that the subnormalized state $Z(\gamma')$ has its $Z$ uniform to $XE$.
We shall show that $Z(\gamma')$ approximates $\Phi_{\Piamp}(\rho)^{\AcceptEvent}$ well.

\begin{eqnarray}
&&\trnorm{\Phi_{\Piamp}(\rho)^{\AcceptEvent} - Z(\gamma')}\\
&=&\trnorm{\gamma^{\AcceptEvent} - \gamma'}\\
&=&  \trnorm{E_i\left[ \gamma^{\AcceptEvent\wedge\AcceptEvent_i} +\gamma^{\AcceptEvent\wedge\RejectEvent_i}-\gamma_i^{\AcceptEvent}\right]}
\label{eqn:split}\\
&\le& E_i\left[ \trnorm{\gamma^{\AcceptEvent\wedge\AcceptEvent_i} +\gamma^{\AcceptEvent\wedge\RejectEvent_i}-\gamma_i^{\AcceptEvent}}\right]\\
&\le& E_i\left[ \trnorm{ \gamma^{\AcceptEvent\wedge\AcceptEvent_i} - \gamma_i^{\AcceptEvent}}] + E_i[\AcceptEvent\wedge\RejectEvent_i\right]\\
&\le& 2\sqrt{\epsilon_{\Ext}} + \eps_c + \eta.
\end{eqnarray}
Eqn.~(\ref{eqn:split}) is because for any $i$, 
\[\gamma^{\AcceptEvent} = \gamma^{\AcceptEvent\wedge\AcceptEvent_i} + \gamma^{\AcceptEvent\wedge\bar \AcceptEvent_i}.\]
The last inequality is by Eqn.~(\ref{eqn:accept}) and the acceptance criterion. 
Thus we conclude that the soundness error of $\Piamp$ is $\le 2\sqrt{\epsilon_{\Ext}} + \eps_c+\eta$.
\end{proof}

\paragraph{Instantiations.} The Miller-Shi seeded PRE (and its unbounded expansion
composition via ``Equivalence Lemma'') subsumes all other constructions, thus is preferred to use in our instantiations.
Thus the main choice is the quantum-proof classical strong extractors.
We use two known methods for constructing such extractors,
both based on the work of K{\"o}nig and Terhal~\cite{KonigT:extractor} showing
that any classically secure one-bit extractor is automatically secure against quantum adversaries
(with slightly worse parameters.) 
The first method is to take a single-bit extractor and increase the output length by using
independent copies of the seeds. The second is to apply Trevisan's compositions
of the single-bit extractor, which was proved to be quantum-secure by De {\em et al.}~\cite{DPVR12}.

The first instantiation uses the first method by
setting the error parameter of the single-bit extractor (e.g. in Proposition C.5 of~\cite{DPVR12}) 
to be $\Theta(\epsilon/\log^c(1/\epsilon))$, where $c$
is a universal constant from Miller-Shi~\cite{MS}, and
the number of independent seeds to be $O(\log^c(1/\epsilon))$. This requires the min-entropy to be
$O(\log^c 1/\epsilon)$. The number of devices is $(n/\epsilon)^{O(\log^c(1/\epsilon))}$, thus is efficient for constant $\epsilon$.

The second instantiation uses the Trevisan's extractor in Corollary 5.1 of~\cite{DPVR12} for $\Ext$.
Fix an extractor ``seed length index'' $\nu$, defined in Miller-Shi~\cite{MS} as a real so that there exists
a quantum-proof strong extractor on $(n, \Theta(n))$ sources, extracting $\Theta(n)$ bits with an error $\epsilon$
 using a seed length $O(\log^{1/\nu} (n/\epsilon))$. 
Set the error parameter for $\Ext$ to be $\epsilon_{\Ext} = \exp(-k^{\nu})$. It extracts $m=k-o(k)$ bits in the quantum
somewhere randomness output. For $\Picert$, use Miller-Shi's expansion protocol with a constant $q$ parameter
(in their main theorem), an output length $N=k$, and an error $\epsilon_{MS}=2^{-(c k)^{\nu}}$, for a constant $c>0$
to be determined later.
The extractor used inside Miller-Shi is the Trevisan's extractor in Corollary 5.4 of~\cite{DPVR12}, which
requires a seed length of $O(\log^{1/\nu}(N/\epsilon_{MS}))$. Thus the total randomness needed for Miller-Shi
is $O(q\log(1/q)+c)k$, which can be made $\le k$ by choosing sufficiently small $q$ and $c$.
Our Master Protocol now outputs $k$ bits with an error $\exp(-\Omega(k^{\nu}))$.
Applying the unbounded expansion protocol of Miller-Shi on this output, the final error remains $\exp(-\Omega(k^{\nu}))$.
Note that the total number of devices is dominated by the number of $\Ext$ instances, which
is $2^{O(\log^2n +  k^{2\nu}) \log k}$. 

The third instantiation uses the Trevisan's extractor in Corollary 5.6 of~\cite{DPVR12}, with an inverse polynomial
error, and extracting a polynomial fraction of input min-entropy.

\section{Future work}\label{sec:open}
If the error is allowed to be a constant, our construction needs only a source of a sufficiently large constant min-entropy and
length and the output can be arbitrarily long, using just a constant number of devices. However, for much smaller error,
our construction does not achieve simultaneously
close to optimal error parameter and efficiency in the number of devices and the running time. In particular,
the construction cannot reach a cryptographic level of security as the number of devices is at least inverse polynomial
of the error parameter. This raises a fundamental question: {\em is this high complexity necessary?}
A preliminary result of the current authors together with Carl~A.~Miller shows that the number of devices
has to be polynomially related to the input length for any untrusted-device protocol that works on all
weak source of a sufficiently small {\em linear} min-entropy, and the devices are allowed to communicate
in between playing two rounds of non-local games. While this does not answer our question when the devices
are not allowed to communicate throughout the protocol, it indicates the difficulty of reducing the number of devices
in seedless extraction. We do not yet have solid evidence to support a significant reduction in complexity.
A strong lower bound (on the number of devices as function of the error parameter) would have a strong
(negative) interpretation that we will have to resort to some stronger assumptions than those for our theorem
in order to achieve cryptographic level of randomness
generation.

Many other new questions arise from our framework of PRE. Is there an ideal PRE, where all parameters are simultaneously optimize?
Or perhaps there are inherent tradeoffs. Other questions include, what quantities about the untrusted-device determine the maximum amount of output randomness? Can one quantify the restrictions on communication to shed light on its tradeoff with other parameters?
Barrett, Colbeck and Kent~\cite{BCK:memory} pointed out additional potential security pitfalls
in composing untrusted-device protocols. An important direction is to develop a security model in which
one can design PREs and prove composition security in a broad setting.

\section{Acknowledgement}\label{sec:ack}
We thank Carl A. Miller for stimulating discussions. We are also indebted to Roger Colbeck and
Fernando Brand{\~a}o for clarifying their own related works.
\bibliographystyle{abbrv}
\bibliography{../resources/rand_Amp}

\end{document}